\definecolor{xlinkcolor}{cmyk}{1,1,0,0}
\newtheorem{theorem}{Theorem}[section]
\newtheorem{corollary}[theorem]{Corollary}
\newtheorem{lemma}[theorem]{Lemma}
\newtheorem{claim}[theorem]{Claim}
\newtheorem{fact}[theorem]{Fact}
\newcommand{\kmed}{$k$-$\mathsf{median}$\xspace}
\newcommand{\kmean}{$k$-$\mathsf{means}$\xspace}
\newcommand{\minsum}{$k$-$\mathsf{minsum}$\xspace}
\newcommand{\NP}{$\mathsf{NP}$\xspace}
\newcommand{\poly}{\text{poly}}
\newcommand{\Po}{\mathcal{P}}
\newcommand{\C}{\mathcal{C}}
\newcommand{\Z}{\mathbb{Z}}
\newcommand{\R}{\mathbb{R}}
\newcommand{\half}{\left(\vec{\frac{1}{2}}\right)}
\newcommand{\eps}{\varepsilon}
\renewcommand{\tilde}{\widetilde}
\newcommand{\NN}{\mathbb{N}}      % for Integers
\newcommand{\PTAS}{$\mathsf{PTAS}$\xspace}
\newcommand{\APX}{$\mathsf{APX}$\xspace}
\newcommand{\lam}{\lambda}
\newcommand{\bS}{\overline{S}}
\newcommand{\mkc}{Max $k$-Coverage\xspace}
\newcommand{\calS}{\mathcal{S}}
\newcommand\footnoteref[1]{\protected@xdef\@thefnmark{\ref{#1}}\@footnotemark}
\newcommand{\sfn}{\mathfrak{n}}
\newcommand{\sfr}{\mathfrak{r}}
\newcommand{\na}{\mathfrak{n}_{\alpha}}
\newcommand{\ra}{\mathfrak{r}_{\alpha}}
\newcommand{\minsumvalue}{1.415}
\newcommand{\OPT}{\mathsf{OPT}}
\newcommand{\setcover}{Set Cover\xspace}
\renewcommand\paragraph{\@startsection{paragraph}{4}{\z@}%
                                      {\parskip}%{3.25ex \@plus1ex \@minus.2ex}%
                                      {-1em}%
                                      {\normalfont\normalsize\bfseries}}
\title{\vspace{-1.5cm}\textbf{
On Approximability of Clustering Problems\\ Without Candidate Centers}}
\date{}
\author{Vincent Cohen-Addad\thanks{
   Google Research, Switzerland.
   \texttt{vcohenad@gmail.com}.
  % Ce projet a b\'en\'efici\'e d'une aide de l'\'Etat g\'er\'ee
   % par l'Agence Nationale de la Recherche au titre du Programme
  %  Appel à projets générique JCJC 2018 portant la r\'ef\'erence
   % suivante : ANR-18-CE40-0004-01.
  }
\and 
Karthik C.\ S.\footnote{New York University, USA.
   \texttt{karthik0112358@gmail.com}. 
   %This work was supported by  ERC-CoG grant 772839, the  Israel Science Foundation (grant number 552/16), and from the Len Blavatnik and the Blavatnik Family foundation.   
   }
\and 
Euiwoong Lee\footnote{University of Michigan, USA.
   \texttt{euiwoong@umich.edu}. 
   %Supported in part by the Simons Collaboration on Algorithms and Geometry. 
   }
}
\begin{document} 
\maketitle  \vspace{-0.5cm} \thispagestyle{empty} 

\begin{abstract}
  The \kmean objective is arguably the most widely-used cost function for modeling clustering tasks in a metric space.
  In practice and historically, \kmean is thought of in a \emph{continuous} setting, namely where the
  centers can be located anywhere in the metric space. For example, the popular Lloyd's heuristic locates a
  center at the mean of each cluster.\vspace{0.15cm}

  Despite persistent efforts on understanding the approximability of \kmean, and other classic clustering problems such
  as \kmed and \minsum, our knowledge of the hardness of approximation factors of these problems remains quite poor.
  In this paper, we significantly improve upon the hardness of approximation factors known in the literature
  for these objectives. 
  We show that if the input lies in a general metric space, it is \NP-hard to approximate:
  \begin{itemize}
  \item Continuous \kmed to a factor of $2- o(1)$; this improves upon the previous inapproximability factor
    of 1.36 shown by Guha and Khuller (J.\ Algorithms '99).  
  \item Continuous \kmean to a factor of $4- o(1)$; this improves upon the previous inapproximability factor
    of 2.10 shown by Guha and Khuller (J.\ Algorithms '99). 
  \item \minsum to a factor of $1.415$; this improves upon the \APX-hardness shown by Guruswami and Indyk (SODA '03).
  \end{itemize}

Our results shed new and perhaps counter-intuitive light on the differences between clustering problems in the continuous setting versus the discrete setting (where the candidate centers are given as part of the input). 
%\vnote{Possible suggestion if you want to talk about the positive results}
%We also show that our reduction is optimal in several ways: the lower bound cannot be improved even for constant values of $k$ and 
%cannot be extended to $\ell_\infty$ metrics of dimension $o(n)$. 

\end{abstract}

\clearpage \setcounter{page}{1}

\section{Introduction}

Given a set of points in a metric space, a clustering is a partition of the points such
that points in the same part are close to each other.
This makes clustering a basic, crucial computational problem for a variety of applications, ranging from 
unsupervised learning, to information retrieval, and even arching over bioinformatics.
The most popular clustering problem (in metric spaces) is arguably the \kmean
problem: Given a set of points $P$ in a metric space, the \kmean problem
asks to identify a set of $k$ \emph{representatives}, called \emph{centers}, such that the sum of
the squared distances from each point to its closest center is minimized (for the \kmed problem, the goal
is to minimize the sum of distances, not squared) -- see Section~\ref{sec:prelims} for formal definitions.
Finding efficient algorithms that produce good solutions with respect to the \kmean or \kmed objectives has been
a major challenge over the last 40 years.
% Notably, Lloyd's algorithm~\cite{} that dates back to the 70s
% remains a method of choice for finding empirically good solutions. 

From a theoretical standpoint, the picture is rather frustrating: the hardness of approximation for \kmean and
\kmed remain quite far from the approximation that the best known efficient algorithms achieve.
In general metrics, the \kmed and \kmean problems are known to be hard to approximate within a factor of
1.73 and  3.94 respectively~\cite{GuK99}, whereas the best known approximation algorithms achieve
an approximation guarantee of 2.67 and 9 respectively~\cite{BPRST15,ANSW16}. 

The \kmed and \kmean problems come in two flavours: \emph{continuous}, where the set of centers can be picked
arbitrarily in the metric; and \emph{discrete}, where the centers have to be picked from a specific set given
as input. While most of the approximation algorithms known focus on the discrete case, algorithms in practice
(such as e.g. Lloyd method) often leverage the freedom on the location of the centers to get empirically good
performances. In practice, the continuous case is arguably more relevant: when looking for a representative
of a set of points, we would like to find the \emph{best} one and not constraint ourself to some specific set.
In fact, for several metrics such as edit distance, the problem of computing a ``good representative''
of a set of arbitrary strings (i.e.: a string whose sum of distances to the other strings is minimized) is a
well-strudied problem in itself.

At a first glance, it appears that the continuous case is computationally easier than the discrete case, as it allows the algorithm designer not to be forced to pick from the input set of candidate centers. 
In Euclidean space, an important result of Matousek~\cite{Matousek00} shows that an $\alpha$-approximation
algorithm for the discrete case of \kmean can be used to obtain a $(1+\eps)\cdot \alpha$-approximation to
the continuous case of \kmean under the $\ell_2$ distance.
This suggests that the continuous case is somewhat \emph{easier} than the discrete case in the Euclidean metric\footnote{Note that we know non-trivial inapproximability results  for Euclidean \kmean and \kmed\cite{ACKS15,LSW17,CK19}.}. Moreover, the 20-year
old hardness results of Guha and Khuller~\cite{GuK99} of $1+2/e$ and $1+8/e$ for \kmed and \kmean respectively
only apply to the discrete case and the only known bounds for the continuous setting derived from their approach
are $1+1/e \approx 1.36$ and $1+3/e \approx 2.10$ for \kmed and \kmean respectively. 
We thus ask: 

\begin{center}
\textit{Can we approximate continuous \kmean (resp.\ continuous \kmed)\\ to a factor less than $1+\nicefrac{8}{e}$ (resp.\ $1+\nicefrac{2}{e}$) in polynomial time? }
\end{center}

Another classic clustering objective in the \minsum problem. 
%  \knote{maybe we want to provide a sentence saying this objective enforces that clusters are somewhat balanced?}.
Given a set of points in a metric space, the \minsum problem asks for a partition of the points to $k$ parts that minimizes
the sum of the pairwise distances between points in the same part of the partition (see Section~\ref{sec:prelims} for formal definition).
Compared to \kmean and \kmed, the fact that the objective function sums over a quadratic number of distances within each cluster favors {\em balanced} clustering where clusters are of similar sizes. 
This fundamental clustering problem introduced in the 70s by Sahni and Gonzalez~\cite{sahni1976p},
together with the capacitated \kmed problem, is one of the problems for which designing an $O(1)$-approximation algorithm
or showing that none exists, for general metric case remains an important open problem. 

The \minsum problem has received a large amount of attention over the
years~\cite{guttmann1998approximation,schulman2000clustering,indyk2000high,VegaKKR03, czumaj2004sublinear, czumaj2010small},
but the current understanding of \minsum is worse than that of \kmed and \kmean: 
while no better than $O(\log n)$-approximation
is known in polynomial time~\cite{bartal2001approximating, Behsaz2019}, the best known hardness of
approximation factor is $(1+\eps)$, due to Guruswami and Indyk~\cite{GI03}, for some small implicit constant $\eps > 0$.
Getting better hardness of approximation for the \minsum remains an important open problem.
Arguably, the intrinsic \emph{continuous} nature of the problem -- the fact that the hardness must be directly encoded
into the locations of the points -- has been one of the most important roadblock for the problem.

\begin{center}
\textit{Can we show hardness of approximation result for \minsum\\ for any explicit, non-negligible constant greater than 1? }
\end{center}

\paragraph{Technical Barriers.}
A well-known framework to obtain hardness of approximation results in the general metric for clustering objectives is through a straightforward reduction from the Max $k$-Coverage or the Set
Cover problem.  
Given an instance of \mkc that consists of a collection of its subsets of some universe, 
we create a 'point' for each element of the universe and a 'candidate center', namely a location
where it is possible to place a center, for each set. Then, we define the distance between a point (corresponding to an element of the universe) and a candidate center (corresponding to a set) to be 1 if the set contains the element and 3 otherwise. 
This reduction due to Guha and Khuller~\cite{GuK99} yields lower bounds of $1+2/e$ and $1+8/e$ for the \kmed and \kmean problems, respectively, in general discrete
metric spaces.

The reduction of Guha and Khuller~\cite{GuK99} for \kmed in general metrics does not even rule out \PTAS for \minsum,
mainly due to the fact that even in one cluster, the objective function sums over all pairs of points whose edges may come from different sets. 
To bypass this issue, the only known \APX-hardness~\cite{GI03} starts from a very restricted set system where every set has $3$ elements and only rules out $(1+\eps)$ factor approximation algorithms for some implicit constant $\eps > 0$. However, reductions form bounded degree set systems are highly restrictive and one cannot typically hope to prove inapproximability for factors $1+\alpha$, for non-negligible $\alpha$. 

One may thus wonder if  there are other structured set systems which could be the  right starting point for proving hardness of approximation results for \minsum. In fact one may further wonder if the hard instances of clustering problems as a whole are completely captured by hard instances of various kinds of set systems or maybe there are other mathematical objects which might be more appropriate to prove improved inapproxiability results for  certain clustering problems.

\subsection{Our Results}
\label{sec:results}

%\knote{Make it explicit: minsum is tech contribution and kmed/kmean is conceptual contribution. }
%\enote{I think they are both conceptual contribution; techniques for minsum are nothing new but the focus should be on hardness of max coverage on structured set systems thing. }

The main contributions of this paper are conceptual. First, we develop an approach to provide the first \emph{explicit} constant inapproximability ratio for the \minsum problem. 
En route to proving the inapproximability of \minsum, 
we also prove that the $(1 - 1/e)$-hardness of approximation for \mkc holds, even for set systems of bounded {\em VC dimension} --- an important notion in computational geometry and machine learning. 
We believe that further study on approximability of \mkc restricted to set systems with additional combinatorial and geometric structures will produce not only interesting results on their own but also have wide applications. 
We discuss the details about the result and the technique further in Section~\ref{sec:introminsum}.

%\knote{Emphasize a bit more the technical novelty.}\enote{I am shy about bragging it..}

%As the VC dimension is an important quantity studied in the intersection of combinatorics and geometry, 
%we believe that this result justifies the effectiveness of studying such objects through the lens of computation, suggested in the last paragraph.
%This result may be interpreted as studying an intermediate mathematical quantity between combinatorics and geometry, which we  suggested in the previous paragraph would act like a bridge to connect inapproximabilty results in graph theory (or more generally discrete mathematics) to the hardness of approximating geometric optimization problems. 

Our second contribution is an  insight for proving hardness of approximation results for \emph{continuous} versions of \kmean and \kmed  in general metrics\footnote{We write the result in this paper for the $\ell_\infty$-metric, but the reader should note that there is a Fr\'echet embedding from any discrete metric to the $\ell_\infty$-metric in high dimensions.}.  In particular, instead of starting the reduction from set-cover-type problems we start from coloring problems and yield a surprising result that the complexity of the discrete and continuous versions are significantly different, but in the counter-intuitive direction --- the
 continous version of the problem is harder to approximate than the discrete version! This is elaborated further in Section~\ref{sec:introinf}. 
%\knote{Emphasize a bit more about doing something completely different from Guha and Khuller and move some of the discussion above to 1.1.2.}

\begin{table}[H]
\begin{center}{\renewcommand{\arraystretch}{1.45}
\resizebox{0.9\linewidth}{!}{
\begin{tabular}{||c|c|c|c|c||}
    \cline{1-4}
    Objective  & \thead{Continuous\\ \kmean}& \thead{Continuous\\ \kmed}&\minsum & \multicolumn{1}{c}{}  \\ [0.5ex]\cline{1-4}\hline
  \multirow{2}{*}{Hardness}  &  2.10  \cite{GuK99} &  1.36 \cite{GuK99}&\APX-Hard \cite{GI03}&{\footnotesize Previous}\\ 
     &  \textbf{4}& \textbf{2}&$\mathbf{\minsumvalue} $&{\footnotesize This paper}\\ \hline
  {Algorithms}  & 36  \cite{kanungo2002local} &5.3 \cite{BPRST15}  &$O(\log n)$ \cite{Behsaz2019} & \multicolumn{1}{c}{} \\ \cline{1-4}
\end{tabular}
}
}
\end{center}
\vspace{-0.1cm}
\caption{{State-of-the-art approximability results for clustering objectives without candidate centers in general metric. 
The algorithmic results for \kmean and \kmed, though not explicitly stated in the literature, can be obtained by considering data points as candidate centers (which loses a factor $4$ and $2$ for \kmean and \kmed respectively) and 
running the algorithms for the discrete problems cited in the references. 
}
}\label{table}
\end{table}

\subsubsection{Inapproximability Results for \minsum}\label{sec:introminsum}
%% \vnote{I like this part which is crisper. Maybe we could place that before the continuous business.}
We state our results on the \minsum problem.
\begin{theorem}[\minsum in $\ell_{\infty}$-metric]\label{thm:introminsum}
\begin{sloppypar}Given $n$ points  in  $O(\log n)$ dimensional $\ell_{\infty}$-metric space it is \NP-hard (under randomized reductions) to distinguish between the following two cases:
\begin{itemize}
\item \textbf{\emph{Completeness}}:  
The \minsum objective is at most 1.
\item \textbf{\emph{Soundness}}: The \minsum objective  is at least $\minsumvalue$.
\end{itemize}
 \end{sloppypar}
\end{theorem}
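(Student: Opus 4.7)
The plan is to establish Theorem~1.1 through a gap-preserving reduction from a promise variant of \mkc on set systems with bounded VC-dimension, whose $(1-1/e)$-inapproximability is asserted in Section~\ref{sec:introminsum} as an independent technical contribution and which I would invoke as a black box. So the input is a universe $U$ together with a collection $\mathcal{S}$ of VC-dimension $O(1)$, together with the promise that either (YES) some $k$ sets cover $U$, or (NO) every $k$ sets cover at most $(1 - 1/e + o(1))|U|$ elements.

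The reduction itself constructs a \minsum instance as follows. For each element $e \in U$, place a point $p_e$ in an $\ell_\infty$-space whose coordinates encode the characteristic vector of $e$ with respect to (a sketch of) $\mathcal{S}$, normalized so that the coordinates lie in $\{0,1\}$. Two elements that share membership in some set $S_i$ will agree on the $i$-th coordinate, so their $\ell_\infty$-distance is at most $1$; while two elements that jointly miss every ``selected'' set will have distance strictly larger, bounded away from $1$ by a definite additive gap. After rescaling by $\binom{n/k}{2}$ we can calibrate the YES cost to exactly $1$.

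The central obstacle is the dimension bound $O(\log n)$: a naive encoding uses one coordinate per set in $\mathcal{S}$, which can be polynomial. Here one must use the bounded VC-dimension of $\mathcal{S}$ essentially. By Sauer--Shelah and the $\eps$-net theorem, a random sample of $O(\log n)$ sets from $\mathcal{S}$ shatters the same projections (up to additive $o(1)$) uniformly over all $\binom{n}{2}$ pairs of elements, with high probability. Choosing the coordinates to be this sample gives an $O(\log n)$-dimensional $\ell_\infty$ embedding whose ``agree-on-some-selected-set'' relation coincides with the true one, except on a negligible fraction of pairs. Getting this simultaneously to preserve the within-cluster cost up to $1+o(1)$ and the between-cluster cost up to an additive gap is the delicate part; the analysis must route carefully through the VC-dimension bound and is where the randomized nature of the reduction enters.

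Finally I would analyze the two cases. Completeness is immediate: the $k$ covering sets induce a clustering whose intra-cluster $\ell_\infty$-distances are all $\le 1$, giving \minsum cost $\le 1$. For soundness, take any $k$-clustering $C_1,\dots,C_k$ of the $n$ points and associate to $C_j$ its ``consensus set'' $S_{i(j)}$ (the one chosen to minimize the number of pairs within $C_j$ whose distance exceeds $1$). The NO promise of \mkc with VC-dimension $O(1)$ implies that, no matter how the $k$ consensus sets are chosen, at least $n/e$ elements lie outside all of them, contributing $\Omega(n^2/k)$ pairs of distance bounded away from $1$ in their respective clusters. A direct extremal calculation, optimizing over the cluster-size profile and over the ``far'' distance value chosen in the embedding, drives the \minsum cost to the claimed $1.415$ in the NO case; tuning these two parameters to equalize the worst-case ratio is how the specific constant $\minsumvalue$ arises.
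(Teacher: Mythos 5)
There is a genuine gap, and it sits exactly at the structural property you chose to import from the \mkc hardness. You invoke \mkc hardness on set systems of \emph{bounded VC dimension} and then, in the soundness analysis, assign each cluster a single ``consensus set'' and charge all pairs not covered by it. But bounded VC dimension does not license this step: a cluster can be a dense union of edges coming from \emph{many different} sets, none of which individually covers much of the cluster, yet every pair inside the cluster is at distance $1$. The paper itself records the fatal counterexample: take any \mkc instance and add a set of size two for every pair of elements. This changes neither the \mkc optimum (the added sets are tiny) nor the bounded-VC-dimension property, but it turns the derived graph into a complete graph, so the \minsum cost collapses to the completeness value in the NO case. What the paper actually proves and uses (Theorem~\ref{thm:mkc}) is \mkc hardness on set systems whose incidence bipartite graph has \emph{girth} $\Omega(\log n/\log\log n)$ and bounded degree; bounded VC dimension is only a weak consequence of girth $>4$. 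The girth is then used quantitatively in the soundness: any cluster of size at most $t/2$ induces a forest in the incidence graph, which caps the number of intra-cluster edges at roughly $r'n'/2$ (via a charging argument on the forest), and bounded degree forces clusters larger than $t/2$ to be sparse. The constant $\minsumvalue$ then comes from a variational optimization over the cluster-size profile against the full coverage curve of Corollary~\ref{cor:mkc} ($\alpha k$ sets cover at most $1-e^{-\alpha}$), not from a single ``$n/e$ uncovered elements'' bound, which by itself is too weak to reach $1.415$.

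Your dimension-reduction step is also off track. Subsampling $O(\log n)$ sets as coordinates can only shrink $\ell_\infty$ distances, and two elements belonging to no sampled set would collapse to distance $0$; no Sauer--Shelah argument rescues this. The paper instead observes that the constructed metric is a $\{1,2\}$-metric in which every point has only $O(1)$ points at distance $1$ (again a consequence of the bounded degree of the set system), and applies the Guruswami--Indyk embedding of such metrics into $O(\log n)$-dimensional $\ell_\infty$. So the two load-bearing ingredients you are missing are the large-girth guarantee (with its forest-based edge count) and the bounded-degree property (used both for large clusters and for the low-dimensional embedding).
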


In order to prove Theorem~\ref{thm:introminsum}, we prove hardness of \mkc in a specialized set system. 
Given an instance $(U, E, k)$ for \mkc where $U$ is the universe and $E$ is a collection of subsets, 
let the {\em girth} of the set system $(U, E)$ to be the girth of the incidence bipartite graph; 
the vertex set is $U \cup E$ and there is an edge $(u, S) \in U \times E$ if and only if $u \in S$.
% \knote{Maybe we need to have an explicit theorem statement for above statement,}
When the girth of a set system is strictly greater than $4$, then no two sets intersect in more than a single element,
so the {\em VC dimension} of the set system is also at most $2$.
Set systems with bounded VC dimensions are known to admit qualitatively better algorithms
such as $O(\log \OPT)$-approximation algorithm for \setcover~\cite{bronnimann1995almost} 
and an FPT-approximation scheme for \mkc~\cite{badanidiyuru2012approximating},
which cannot exist for general set systems~\cite{KLM18, M20}. 

We prove a hardness result showing that, for polynomial time approximation for \mkc, having a bounded VC dimension
(even a super-constant girth) does not help.

\begin{theorem} [Informal statement of Theorem~\ref{thm:mkc}]\label{thm:intromkc}
\begin{sloppypar}
For any $\eps > 0$, it is \NP-hard (under randomized reductions) to approximate \mkc within a factor of $(1 - 1/e + \eps)$ 
even when the set system has girth $\Omega(\log n / \log \log n)$ and maximum degree $O_{\eps}(1)$.
\end{sloppypar}
\end{theorem}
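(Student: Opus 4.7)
The plan is to combine an existing bounded-degree $(1-1/e+\eps)$-hardness for \mkc with a randomized girth-amplification reduction. The bounded-degree starting point should come from applying Feige's partition-system reduction to a bounded-occurrence Label Cover instance (the latter obtainable via standard expander-replacement gadgets), yielding a Max $k$-Coverage instance in which both the maximum set size and the maximum element degree are $O_\eps(1)$, but whose incidence graph may have many short cycles arising from overlapping satisfying labelings of a common Label Cover query.

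For the girth-amplification step, I would replace each element $u$ of the universe by $N = (\log n)^{C}$ copies $u^{(1)}, \ldots, u^{(N)}$ for a constant $C = C(\eps)$, and for each original incidence $(u, S)$ independently sample an index $\pi(u,S)\in[N]$ uniformly at random. The new set corresponding to $S$ is $S' = \{u^{(\pi(u,S))} : u \in S\}$. A cycle of length $2\ell$ in the new incidence graph requires $\ell$ independent coincidences among the random indices, each of probability $1/N$, so a first-moment bound gives that the expected number of cycles of length at most $2\ell^{*}$ with $\ell^{*} = \Theta(\log n / \log\log n)$ is at most $|E| \cdot \sum_{\ell \le \ell^{*}} (sd/N)^{\ell} = o(|E|)$, where $s, d = O_\eps(1)$ are the maximum set size and element degree. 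Deleting one set per surviving short cycle destroys all short cycles while losing only a $o(1)$ fraction of sets, hence only an additive $o(1)$ loss in the approximation ratio.

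The subtle part is preserving the $(1-1/e)$ gap through this dispersion, because independent per-incidence randomness tends to ``separate'' overlapping sets: two sets that shared an element in the original may contain different copies of it in the new instance, inflating their union and trivializing the problem. I would address this by coupling $\pi(u,\cdot)$ across the sets that lie in a common cluster induced by the Label Cover structure of the starting instance (for example, all sets arising from a common query block), so that the coverage structure inside a cluster mirrors that of the original. Tuning cluster sizes against $N$ and $\ell^{*}$ should keep the intra-cluster coupling from creating new short cycles while still delivering the target girth.

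The main obstacle will be making this cluster-coupled dispersion argument go through in all three directions simultaneously: \textbf{completeness} (the $k$ ``good'' sets of the original still cover a $(1-o(1))$ fraction after dispersion and deletion), \textbf{soundness} (any $k$ sets in the dispersed instance project back to $k$ sets in the original with comparable coverage, reducing to the starting $(1-1/e+\eps)$ bound), and \textbf{girth} (no cycle of length below $2\ell^{*}$ survives the small set-deletion step). The randomized nature of the reduction --- explicitly permitted in the theorem statement --- simplifies this analysis considerably compared to any deterministic instantiation.
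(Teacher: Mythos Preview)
Your girth-control step via the first-moment bound is fine and matches the paper's, but the dispersion itself breaks completeness before it ever touches soundness. You replace each element $u$ by $N$ copies while keeping the set family and $k$ unchanged; each new set $S'$ still has exactly $|S|$ elements. So $k$ sets together cover at most $\sum_{i\le k} |S_i| \le k\cdot s$ new elements out of a universe of size $N|U|$, and since in Feige's instance $k\cdot s \approx |U|$, the good sets now cover only a $\Theta(1/N)$ fraction rather than $1-o(1)$. Discarding the isolated copies does not help: at most $d|U|$ copies survive, and the $k$ good sets still cover only $|U|$ of them, a $1/d$ fraction. Your proposed cluster coupling is aimed at the separation-of-sets issue on the soundness side, but no coupling of the $\pi(u,\cdot)$'s can enlarge the good sets; the real obstruction is on the completeness side.

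The paper's remedy is to blow up \emph{both} sides, and by a polynomial rather than polylogarithmic factor. Passing to the dual (so Feige's sets become vertices of an $r$-uniform hypergraph and elements become hyperedges), it replaces each vertex by a cloud of $B = n$ copies and each hyperedge by $\ell = \Theta(B)$ random copies, with the new $k$ scaled to $kB$. Completeness is now immediate: $S^*\times [B]$ hits every new hyperedge. The price is that soundness can no longer go by projection---a size-$kB$ hitting set in the new instance does not collapse to $k$ vertices in the old one. Instead, any new solution $S$ defines fractional weights $\alpha_v = |S\cap(\{v\}\times[B])|/B$ with $\sum_v \alpha_v \le k$; for each original hyperedge $e$, the $\ell$ random copies concentrate (Chernoff on negatively correlated events, union-bounded over the $2^{Br}$ ways $S$ can meet the clouds of $e$) so that the new coverage is within $\eps$ of what independent rounding with the $\alpha_v$'s would achieve in the old instance---and Feige's soundness already holds against that. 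This union bound over $2^{\Theta(B)}$ configurations is precisely why $\ell$ must be $\Theta(n)$ rather than $(\log n)^{O(1)}$.
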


%In particular, we show that for an $\eps>0$, \mkc remains \NP-hard to approximate to within a factor of $(1 - 1/e + \eps)$ even  the incidence graph of \mkc instance has girth $\Omega(\log n / \log \log n)$ %and maximum degree $O_{\eps}(1)$ (see Theorem~\ref{thm:mkc}). 

\begin{sloppypar}The above result is proved by ``lifting'' Feige's optimal hard instances of \mkc~\cite{F98}.
Given a hard instance of \mkc without any girth guarantee, 
we take the dual set system to view it as a hypergraph vertex coverage problem.
For each vertex, we create a {\em cloud} of many vertices, and for each hyperedge, 
we create many random copies where each copy contains a random vertex in each cloud.

Intuitively, putting too many hyperedges will result in many intersections between hyperedges, which may create a short cycle. 
On the other hand, putting too few hyperedges will make the new instance significantly different from the original instance, 
possibly allowing a small hitting set that does not reveal the hitting set in the original hypergraph.
By appropriately choosing the size of cloud and the number of hyperedges and carefully analyzing the probabilities for both bad events,
it can be shown that the hardness is almost preserved while the girth becomes large. %\knote{We can provide more intuition and provide more details here. }
\end{sloppypar}

Given the hardness of \mkc with large girth, the reduction to \minsum is simple; 
given a set system for \mkc, the instance for \minsum is given by the graphic metric
where each vertex corresponds to an element and two vertices are connected if the corresponding elemtns are contained in the same set.
If the set system can be partitioned into $k$ sets in the system, the graph can be partitioned into $k$ cliques, so
every pair of vertices in the same cluster are at distance 1 from each other. 
To analyze the soundness, even though edges within one cluster may come from different sets, 
the girth $\Omega(\log n / \log \log n)$ is larger than the average cluster size (which is still bounded by $O_{\eps}(1)$), 
so we can argue that most clusters, roughly correspond to only one set of \mkc. 

\subsubsection{Inapproximability Results for Continuous \kmean and \kmed in General Metric Space}\label{sec:introinf}
Finally, we state below the inapproximability of \kmed and \kmean in the continuous case for the $\ell_\infty$-metric, whose factors are even higher than that of \cite{GuK99} for \kmed and \kmean in the discrete case\footnote{By applying Fr\'echet embedding, we can embed any discrete metric into the $\ell_\infty$-metric, preserving all pairwise distances. } for the $\ell_\infty$-metric. 

\begin{theorem}[Informal statement of Theorems~\ref{thm:kmeanellinfhighdim}, \ref{thm:kmedellinfhighdim}, and \ref{thm:2meanmedellinfhighdim}]\label{thm:introellinf}
For every constant $\varepsilon>0$, there exists a constant integer $k$ such that, given $n$ points  in  $\poly(n)$ dimensional
    $\ell_{\infty}$-metric space it is \NP-hard to approximate:
  \begin{itemize}
  \item the \kmean objective to within $ 4-\varepsilon$ factor.
  \item the \kmed objective to within $2-\varepsilon$ factor.
    \end{itemize}
  Moreover, the above statement holds for $k =4$ (and can be further strengthened to hold for $k=2$ by assuming the Unique Game Conjecture).
\end{theorem}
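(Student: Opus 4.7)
The plan is to reduce from a coloring-type problem rather than from \mkc or \setcover, exploiting a tight $\ell_{\infty}$ gadget. The key geometric fact is: if a cluster $C$ lies in an $\ell_{\infty}$-ball of radius $1/2$, placing the continuous center at its midpoint gives \kmed cost $|C|/2$ and \kmean cost $|C|/4$; whereas if some coordinate $j$ has values balanced between $0$ and $2$ on every point of $C$, then every candidate $c^{(j)}$ forces both $\sum_{p\in C}|p^{(j)}-c^{(j)}|\ge |C|$ and $\sum_{p\in C}(p^{(j)}-c^{(j)})^2\ge |C|$, and since $\|p-c\|_{\infty}\ge |p^{(j)}-c^{(j)}|$ the continuous freedom cannot avoid paying it. The ratios $2$ and $4$ in the theorem are exactly the ratios between these two regimes, which explains why the continuous problem is \emph{harder} than the discrete one in this metric: the discrete center is forced to one of the data points and loses the factor $1/2$ (resp.\ $1/4$) from the midpoint trick.

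For the NP-hard statement with $k=4$, I would start from an NP-hardness of $4$-coloring a bounded-uniformity hypergraph $H=(V,E)$, where completeness is ``$H$ admits a proper $4$-coloring'' and soundness is ``every $4$-partition of $V$ leaves almost every vertex inside a monochromatic hyperedge.'' Build points $\{p_v\}_{v\in V}\subset\R^{\poly(n)}$: each hyperedge $e$ owns a small bundle of dedicated coordinates whose values on $v\in e$ are engineered so that any proper subset has range $\le 1$, while the full set has balanced range $2$ (e.g.\ by replicating the three nontrivial $2$-vs-$2$ splits on the four vertices of $e$); set $p_w$ to the midpoint value for $w\notin e$ so such coordinates cost $0$ from non-incident points. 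Completeness: clustering by a proper $4$-coloring keeps each cluster in an $\ell_{\infty}$-ball of radius $1/2$, yielding total cost $\le n/2$ (resp.\ $n/4$). Soundness: for every $4$-partition, almost every point sits inside a monochromatic hyperedge whose dedicated coordinate(s) then produce a balanced $\{0,2\}$-spread inside its cluster, forcing $\|p-c\|_{\infty}\ge 1-o(1)$ for such a point.

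For the $k=2$ statement under the \UGC, I would replace the hypergraph-coloring start with a Unique Games instance and build a long-code embedding: each variable contributes $2^L$ points on $\{0,2\}^L\subset\ell_{\infty}$, and the UG projection constraints identify coordinates across variables. In completeness a ``dictator'' $2$-partition places each cluster on a single face of the hypercube (diameter $1$). In soundness a noise-stability / invariance-principle argument, in the spirit of Khot--Kindler--Mossel--O'Donnell, shows that any $2$-partition that is not $\eps$-close to a dictator has, inside one of its two clusters and on almost every point, a coordinate taking both $0$ and $2$ with constant mass, again hitting the balanced-spread regime.

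The main obstacle will be making the soundness bound tight up to $o(1)$: it is not enough to force \emph{some} cost blowup, one must ensure simultaneously that (i) the bad coordinates are genuinely balanced across the cluster, so the optimal $c^{(j)}$ is pinned at the midpoint and pays $1$ per incident point rather than less, and (ii) \emph{almost every} point in each bad cluster sees a bad coordinate, so the $\ell_{\infty}$-max over coordinates dominates the per-point cost and does not get diluted by the many ``easy'' coordinates. Conditions (i)–(ii) together are precisely what rules out the Guha--Khuller covering reduction from saturating the continuous bound, and are the reason passing through hypergraph coloring (for $k=4$, NP-hard) or Unique Games (for $k=2$) is necessary to reach the factors $4-\eps$ and $2-\eps$ exactly.
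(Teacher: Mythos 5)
Your high-level strategy---reduce from a coloring/independent-set hardness rather than a coverage problem, give each edge its own coordinate, exploit the midpoint trick in completeness and a balanced two-valued coordinate in soundness, with the ratios $2$ and $4$ coming from exactly the two regimes you describe---is the same as the paper's, which instantiates it with Khot--Saket's hardness of finding $q$ disjoint independent sets (and the $4$-disjoint-independent-sets and Bansal--Khot variants for $k=4$ and $k=2$). However, two of your concrete steps fail.

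First, the $k=4$ gadget is geometrically impossible as described. You ask for coordinates on the four vertices of a hyperedge such that ``any proper subset has range $\le 1$ while the full set has balanced range $2$.'' But $\ell_\infty$-diameter is monotone under taking subsets: if the full quadruple has range $2$ on some coordinate, some pair realizes it, and every proper subset containing that pair already has range $2$. This is fatal because under ordinary (non-monochromatic) $4$-colorability a completeness cluster may contain three of the four vertices of a hyperedge. The paper avoids this by working with graphs and \emph{disjoint independent sets}, so a completeness cluster meets each edge in at most one endpoint; the gadget then only needs single vertices (not subsets) to be close to the center, achieved by giving the two endpoints values $+2$ and $-2$ on the edge's coordinate (and $0$ to everyone else) while the center takes value $+1$, $-1$, or $0$ according to which endpoint, if any, lies in the cluster---distance exactly $1$ in all cases. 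Second, your soundness claim that a balanced coordinate forces $\|p-c\|_\infty\ge 1-o(1)$ \emph{for each such point} is false: with half the incident points at $0$ and half at $2$, the center may sit at $0$ and half the points pay nothing on that coordinate. Only the aggregate bound over a conflicting pair holds, via $\|A(u')-c\|_\infty+\|A(v')-c\|_\infty\ge\|A(u')-A(v')\|_\infty$, and to sum these bounds without double counting you must greedily extract vertex-disjoint violated edges inside each large cluster (as in Claim~\ref{cl:so}), which is precisely what the soundness guarantee ``no independent set of size $\eps|V|$'' provides. Finally, for $k=2$ you do not need a new long-code/invariance-principle construction: Bansal--Khot already gives, under the Unique Games Conjecture, hardness of distinguishing two disjoint independent sets covering $(1-\eps)|V|$ vertices from no independent set of size $\eps|V|$, and the identical reduction applies black-box.
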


The above result is very surprising as it breaks the more than twenty year old bound of \cite{GuK99}. Furthermore it is believed  that the bound of \cite{GuK99} is indeed tight for the discrete case as there are $1+2/e$ and $1+8/e$ parameterized approximation algorithms for \kmed and \kmean problems respectively  in general metrics \cite{Cohen-AddadG0LL19} (note that this is merely an indication that \cite{GuK99} might be tight for the discrete case and not a formal conclusion).  Therefore this provides \emph{morally} the first separation between the continuous and discrete versions for clustering problems. 

 Further, we show that the bound in Theorem~\ref{thm:introellinf} is tight for a large range of settings.
 First, for any constant $k$, we note that there is a simple  2-approximation algorithm to the
 continuous \kmed problem and a 4-approximation algorithm for the continuous \kmean problem
 in the $\ell_{\infty}$-metric  both running in polynomial time. 
 Second, we show that the hardness result with the same gap cannot hold for much smaller dimensions (see  Corollaries~\ref{cor:below2},~\ref{cor:fptk}~and~\ref{cor:1+2/e}).

The proof of Theorem~\ref{thm:introellinf} follows from a new technique to construct clustering problem inputs; instead of starting from set-cover-type problems  (as in the framework of \cite{GuK99}), we  start our reductions from the hard instances of  $k$-coloring (or equivalently on finding $k$-disjoint independent sets) in graphs due to \cite{KS12}. In other words, instead of starting from covering problems on graphs (like almost all other results in literature) and embedding a pair of vertices sharing an edge as points that are close  and other vertex pairs far away, we start from the complement of cover problems, i.e., the independent set problem and embed  a pair of vertices \emph{not} sharing an edge as points that are close and other vertex pairs far away, leveraging the stronger inapproximability of the independent set problem.  %\knote{Elaborate more slowly.}

\subsection{Organization of the Paper}
The paper is organized as follows. In Section~\ref{sec:prelims}, we introduce some notations that are used throughout the paper.
 In Section~\ref{subsec:mkc}, we prove  our hardness of approximation result for \mkc on instances with large girth (i.e., Theorem~\ref{thm:intromkc}).
  In Section~\ref{sec:minsum}, we prove our hardness of approximation result for \minsum objective in general metrics (i.e., Theorem~\ref{thm:introminsum}). In Section~\ref{sec:cont}, we prove our improved inapproximability results for \kmean and \kmed in general metrics (i.e., Theorem~\ref{thm:introellinf}).

\section{Preliminaries}
\label{sec:prelims}

\paragraph*{Notations.} For any two points $a,b\in\mathbb{R}^d$, the distance between them in the $\ell_{\infty}$-metric is denoted by 
$\|a-b\|_{\infty} = \underset{{i\in[d]}}{\max}\ \{|a_i-b_i|\}$.  Let $e_i$ denote the vector which is 1 on coordinate $i$ and 0 everywhere else. We denote by $\half$, the vector that is $\nicefrac{1}{2}$ on all coordinates.

\paragraph*{Clustering Objectives.} Given two sets of points $P$ and $C$ in a metric space, we define
 the \kmean cost of $P$ for $C$
 to be $\underset{p \in P}{\sum}\left(\underset{c \in C}{\min}\ \left(\text{dist}(p,c)\right)^2\right)$ and
 the \kmed cost to be $\underset{p \in P}{\sum}\left(\underset{c \in C}{\min}\ \text{dist}(p,c)\right)$.
 Given a set of points $P$  in a metric space and partition $\pi$ of $P$ into $P_1\dot\cup P_2\dot\cup\cdots \dot\cup P_k$, we define
 the \minsum cost of $P$ for $\pi$
 to be $\underset{i\in [k] }{\sum}\left(\underset{p,q \in P_i}{\sum}\ \text{dist}(p,q)\right)$.
 Given a set of points $P$, the \kmean/\kmed (resp.\ \minsum)
 objective is the minimum over all $C$ (resp.\ $\pi$) of cardinality $k$ of the \kmean/\kmed (resp.\ \minsum)
 cost of $P$ for $C$ (resp.\ $\pi$). Given a point $p \in P$, the contribution to the
 \kmean (resp.\ \kmed) cost of $p$ is $\underset{c \in C}{\min}
 \left(\text{dist}(p,c)\right)^2$ (resp.\ $\underset{c \in C}{\min}\ \text{dist}(p,c)$).

\section{Hardness of \mkc with large girth} 
\label{subsec:mkc}

In this section, we prove the following hardness of \mkc with large girth and bounded degree and then use the hardness result to prove Theorem~\ref{thm:minsum} for \minsum clustering in the next section. 
Like \kmed   \cite{GuK99}, the result is based on hardness of \mkc;
given an instance $(U, E, k)$ of \mkc, we output the corresponding instance of \minsum consisting a graph $G = (U, E')$ where
$v, u \in U$ have an edge if and only if there exists $S \in E$ that contains both $u$ and $v$. 
However, unlike \kmed, just the objective function value of \mkc does not suffice to prove results for \minsum.
For example, consider an instance of \mkc where typical sets are large, 
but we add a set of size two for each pair of elements. These sets of size two are small so that it will not affect the \mkc objective function,
but the outcome of the reduction will be a complete graph! 
Therefore, we need to start from hardness of \mkc in a specialized set system.

The proof starts from the standard \mkc hardness result of Feige~\cite{F98} that has no guarantee on girth.
Considering the dual set system has a hypergraph, we put many copies of each vertex and many random copies of each hyperedge.
This idea was previously used in subgraph hitting sets and constraint satisfaction problems~\cite{guruswami2015inapproximability, ghosh2017weak}.

\begin{theorem}
\begin{sloppypar}
For any $\eps > 0$, given an instance $(U, E, k)$ is \mkc 
where the incidence graph has girth $\Omega(\log n / \log \log n)$ and maximum degree $O_{\eps}(1)$, 
it is \NP-hard (under randomized reductions) to distinguish between the following two cases:
\begin{itemize}
\item \textbf{\emph{Completeness}}:  
There exists $k$ sets that cover $E$. 

\item \textbf{\emph{Soundness}}: 
Any $k$ sets cover at most an $(1 - 1/e + \eps)$ fraction of $E$. 
\end{itemize}
\end{sloppypar}
\label{thm:mkc}
\end{theorem}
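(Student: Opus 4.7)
Plan: The proof lifts Feige's $(1-1/e)$-hard instance of \mkc to one whose incidence bipartite graph has both girth $\Omega(\log n/\log\log n)$ and maximum degree $O_\eps(1)$, via a randomized \emph{cloud blowup} of the dual hypergraph. Start from Feige's hard instance $(U, E, k)$ and view it dually as a hypergraph $(V, H)$: vertices $V$ correspond to the sets of $E$, and each hyperedge $h \in H$ lists the sets containing the corresponding element of $U$. After a light preprocessing, one may assume that the maximum set size and element frequency are both bounded by $\Delta = O_\eps(1)$ and, in the YES case, the optimal cover $V^* \subseteq V$ of size $k$ hits each $h \in H$ with multiplicity at least $\tau$ for a parameter $\tau$ to be chosen below (the \emph{multi-cover} property).

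Blowup construction: For each $v \in V$, form a cloud $C_v = \{v^{(1)}, \ldots, v^{(r)}\}$ of $r$ copies and set $V' := \bigcup_v C_v$. For each $h \in H$, produce $t$ independent random copies $h^{(1)}, \ldots, h^{(t)}$; the copy $h^{(j)}$ picks, for every $v \in h$, a uniform random index $i_v^{(j)} \in [r]$ and is defined as $\{v^{(i_v^{(j)})} : v \in h\}$. The budget in the new instance is kept at $k$. Parameters $r, t, \tau$ are tuned so that $r$ is (slowly) growing with $n$ to suppress short cycles, $\tau$ is large enough that $(1-1/r)^\tau \leq \eps$ (for completeness), and $t\Delta/r$ is $O_\eps(1)$ (for bounded degree).

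Soundness is immediate via the projection $\pi_V \colon V' \to V$: for any $V'^* \subseteq V'$ with $|V'^*| = k$, its projection $V^* := \pi_V(V'^*) \subseteq V$ satisfies $|V^*| \leq k$, and a blowup hyperedge $h^{(j)}$ can be covered by $V'^*$ only when $V^* \cap h \neq \emptyset$, i.e., $h$ is covered by $V^*$ in the original. By Feige's soundness, $V^*$ covers at most $(1-1/e+\eps)|H|$ hyperedges of $H$, so $V'^*$ covers at most $(1-1/e+\eps)|H'|$ hyperedges of $H'$. For completeness, take $V^* = \{v_1, \ldots, v_k\}$ from Feige's YES instance and set $V'^* := \{(v_\ell, 1) : \ell \in [k]\}$; a random copy $h^{(j)}$ is covered with probability $1 - (1-1/r)^{|h \cap V^*|} \geq 1 - (1-1/r)^\tau \geq 1 - \eps$, and a Chernoff bound gives that a $(1-O(\eps))$-fraction of $H'$ is covered with high probability. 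Trimming the uncovered hyperedges, the short-cycle offenders, and the sparse excess-degree offenders restores $100\%$ completeness at an $O(\eps)$-additive cost in the soundness ratio.

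For the girth, a cycle of length $2\ell$ in the incidence graph corresponds to an alternating sequence of $\ell$ hyperedge copies and $\ell$ cloud members requiring $2\ell$ specific random-index collisions, each of probability $1/r$. Summing over combinatorial arrangements (at most $n \cdot \Delta^{2\ell}$ length-$2\ell$ walks in the preprocessed original, times $(rt)^\ell$ blowup choices, and divided by $r^{2\ell}$), the expected number of cycles of length at most $g = c\log n/\log\log n$ is $o(|H'|)$ provided $r/(t\Delta^2)$ grows mildly faster than $\log n$; Markov's inequality then lets us remove the offenders. Each $v^{(i)}$ has degree distributed as $\mathrm{Binomial}(t\deg(v), 1/r)$, which concentrates around $t\Delta/r = O_\eps(1)$, and Chernoff lets us cap the maximum degree at $O_\eps(1)$. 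The main technical obstacle is calibrating $r, t, \tau$ simultaneously: $r$ must grow with $n$ for the girth bound, $\tau$ must be $\Theta(r\log(1/\eps))$ for completeness, and yet the hyperedge sizes in the preprocessed instance (which control the multi-cover-capable $\tau$) must remain $O_\eps(1)$ for the max-degree bound. This is resolved by a careful two-stage reduction in which Feige's instance is first preprocessed into a multi-cover variant without expanding hyperedge sizes, and then the randomized blowup is applied; this mirrors the technique from prior work on subgraph-hitting and CSP inapproximability mentioned in the preceding discussion.
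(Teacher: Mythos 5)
Your construction is the same cloud blow-up of the dual hypergraph that the paper uses, and your girth computation is in the right spirit, but there is a fatal gap in how you set the budget, and you have in fact located it yourself without resolving it. You keep the budget at $k$ and certify completeness with one copy per solution vertex, which forces you to assume a multi-cover property with multiplicity $\tau=\Theta(r\log(1/\eps))$ so that $1-(1-1/r)^{\tau}\ge 1-\eps$. But $\tau$ is at most the multiplicity $|h\cap V^*|\le |h|$, and $|h|$ is exactly the frequency of the corresponding element, i.e.\ a degree in the incidence graph that the theorem requires to be $O_{\eps}(1)$. Since the cloud size $r$ must grow with $n$ for your girth bound (and it does in any workable parametrization --- the paper takes $B=n$), you need $\tau=\omega(1)$ while also needing $\tau\le |h|=O_{\eps}(1)$. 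The promised ``two-stage preprocessing into a multi-cover variant without expanding hyperedge sizes'' cannot exist: multiplicity is bounded by hyperedge size by definition. Moreover, Feige's completeness case gives a \emph{partition}, so $|h\cap V^*|=1$ and a single copy per vertex covers only a $1/r\to 0$ fraction of the copies of each hyperedge.

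The paper's resolution is to scale the budget to $kB$ (where $B$ is the cloud size) and take the \emph{entire} clouds $S^*\times[B]$ as the completeness solution, which covers every new hyperedge deterministically --- no multi-cover assumption, no Chernoff bound in completeness. The price is paid in soundness, where your projection argument no longer applies: a budget-$kB$ solution induces a fractional profile $\alpha_v=|S\cap(v\times[B])|/B$ with $\sum_v\alpha_v=k$, and one must show via Chernoff plus a union bound over all $2^{Br}$ cloud subsets that the covered fraction is within $\eps$ of what independent rounding with probabilities $\alpha_v$ achieves in the original instance; one then invokes the fractional (Guha--Khuller-type) soundness of Feige's instance. This Chernoff-plus-union-bound step, together with taking $\ell=aB$ copies per hyperedge with $B=n$ so that the deleted short-cycle hyperedges are an $o(1)$ fraction of the $\Theta(n^2)$ total, is the actual technical content you would need to supply.
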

\begin{proof}
We consider the dual set system of the hard instance of \mkc given by Feige~\cite{F98} as a regular $r$-uniform hypergraph $H_0 = (V_0, E_0)$, which has $n$ vertices, $m$ hyperedges, and degree $d$ (so that $nd = mr$). In the completeness case, there is a set $S^* \subseteq V_0$, $|S^*| = k = n / r = m / d$ that intersects every hyperedge $e \in E_0$. In the soundness case, any set $|S| \leq k$ hits at most $(1 - 1/e +\eps)$-fraction of hyperedges. Feige's reduction also ensures that this hardness can be achieved with $r$ and $d$ being constants (depending on $\eps$). 

The new hypergraph $H = (V, E)$ is the following. Let $\ell$ and $B$ be numbers determined later (they will be both $\Theta(n)$).
\begin{itemize}
\item $V = V_0 \times [B]$. 
\item For each $e \in E_0$,
\begin{itemize}
\item For each $v \in e$, sample $(j_{1, v}, \dots, j_{\ell, v}) \in [B]^{\ell}$ uniformly from the set of $\ell$-tuples where every number in $[B]$ appears the same number of times (we will ensure $B$ divides $\ell$). 
\item For each $i \in [\ell]$, add $\{ (v, j_{i, v}) \}_{v \in e}$ to $E$. 
\end{itemize}
\item For each simple cycle of the incidence bipartite graph of length at most $t$ (which will be fixed later), delete an arbitrary hyperedge in it. 
\end{itemize}
Then $|V| = |V_0| \cdot B$, $|E| \leq |E_0| \cdot \ell$. Note that the girth is at least $t$, and the maximum degree is at most $d \cdot \Theta(\ell / B) = O(1)$. 

\paragraph{Girth control.}
We bound how many hyperedges we deleted in the last step of the construction. 
Consider the incidence bipartite graph of the hypergraph; hyperedge vertices are (a subset of) $E_0 \times [\ell]$ and element vertices are $V_0 \times [B]$. 
Fix a $2t$-tuple 
\[
((v_1, p_1), (e_1, q_1), (v_2, p_2), (e_2, q_2), \dots, (v_t, p_t), (e_t, q_t)), 
\]
where all vertices are different and $v_i, v_{i + 1} \in e_i$ (and $v_1 \in e_t$). 
We have $n$ choices for $v_1$, and after that $d$ choices for each $e_i$ and $r$ choices for each $v_i$, so the number of such tuples is upper bounded by 
\[
n \cdot (dr)^t \cdot (B \ell)^t. 
\]
For each possible edge in the tuple (say $((v_i, p_i), (e_i, q_i))$), the probability that it appears is the probability that 
$j_{q_i, v_i} = p_i$ in the above sampling procedure for $e_i$. 
Since $j_{q_i, v_i}$ draws from $B$ numbers and we will take $t = o(\log n) \ll B$, this probability, conditioned on existence of an arbitrary set of edges in the tuple, is at most $2/B$. 
%Given a fixed tuple, the probability that it becomes a cycle is at most $1/B^{2t}$, because each potential edge appears independently or in a negatively correlated way with probability at most $1/B$. 
%(Because all $(e_i, q_i)$'s are different and independent. The only dependence between cycle edges can happen when $v_i = v_{i + 1}$, and in this case the probability of being a cycle is $0$.)
So the expected number of cycles is at most 
\[
n \cdot (dr)^t \cdot (B \ell)^t \cdot (2/B)^{2t} = n \cdot (dr)^t (4 \ell / B)^t. 
\]
We will take $\ell = a B$ for some constant $a$ depending on $r$ and $\eps$. Let $B = n$. 
Using Markov's inequality, with probability at least $3/4$, the number of hyperedges we deleted is at most 
\[
4n \cdot (dr)^t (4 \ell / B)^t = 
4n \cdot (4adr)^t = o(m\ell). 
\]
as long as $t = o(\log n)$. Fix $t = \Omega(\log n / \log \log n)$.
We can ensure that the girth is at least $t$ with losing only $o(1)$ fraction of hyperedges. 

\paragraph{Completeness.}
If $S \subseteq V_0$ is a feasible solution for the \mkc instance (i.e., $S$ intersects every $e \in H_0$), then $S \times [B]$ is a feasible solution for the new instance.

\paragraph{Soundness.}
Fix a hyperedge $e \in E_0$. For simplicity let us assume $e = (v_1, \dots, v_r)$. 
Fix $C_1, \dots, C_r \subseteq [B]$, and let $\alpha_i := |C_i| / B$. 
We want to show that out of $\ell$ hyperedges in the new instance coming from $e$, approximately $1 - \prod_{i=1}^r (1 - \alpha_i)$ fraction of hyperedges intersect $\cup_{i=1}^r (v_i \times C_i)$.
For one such hyperedge, the probability is exactly $1 - \prod_{i=1}^r (1 - \alpha_i)$.
The $\ell$ hyperedges are not independent, but since the distribution is {\em negatively correlated} (i.e., if one hyperedge intersects $\cup_{i=1}^r (v_i \times C_i)$, other hyperedges are less likely to intersect it.) We can still apply the Chernoff bound so that the probability that the total number is $\eps \ell$ more than the expectation is at most $\exp(- \Theta(\eps^2 \ell))$. Since there are at most $2^{Br}$ choices of $C_1, \dots, C_r$ and we let $\ell = aB$, with probability at most 
\[
2^{Br} \cdot \exp(-\Theta(\eps^2 a B)) \leq \exp(B(r - \Theta(\eps^2 a))),
\]
which is exponentially small in $B$ (thus $n$) if we take $a$ to be a large constant depending on $r$ and $\eps$. 
Union bounding over all $e \in E_0$, we showed that for any $S \subseteq V$ for the new instance with $|S| \leq kB$, if we let $\alpha_v := |S \cap (v \times [B])| / B$ (so that $\sum_v \alpha_v = k$), then the fraction of hyperedges $S$ intersects in the new instance is at most $\eps$ more than the expected fraction of hyperedges hit in the old instance if we {\em round} each $v \in V_0$ independently with probability $\alpha_v$. In the soundness case the latter is at most $(1-1/e+\eps)$, so with high probability the optimal value in the new instance is at most $(1-1/e+2\eps)$. 
\end{proof}

\iffalse
\paragraph{Max Degree.}
In the reduction, if vertex $v \in V_0$ has degree $d$ in the original hypergraph, since we sample $\ell = aB$ hyperedges for each original hyperedge, the expected degree of any $(v, i)$ in the new hypergraph is exactly $ad$. Indeed, we can make sure that each vertex in the new instance is $ad$-regular (before the deletion.) Feige's reduction ensures that the original hypergraph is $d$-regular, and when we sample $\ell = aB$ hyperedges for one original hyperedge, we can make sure that each vertex is contained in exactly $a$ of them. The Chernoff bound will go through because this distribution will be negatively correlated. So even after deletion, if we consider the primal set system, each set has at most $ad = O(1)$ elements.\footnote{Actually there is some more check to do in girth control, but should be fine.}
\fi

To prove hardness of \minsum, we additionally need to prove the in the soundness case, no $\alpha k$ sets cover more than an $(1 - e^{-\alpha})$ fraction of elements for any constant $\alpha > 0$. 
The same construction ensures it.
\begin{corollary}
Theorem~\ref{thm:mkc} holds with the following stronger soundness: For any constant $\alpha > 0$, 
\begin{itemize}
\item Soundness: Any $\alpha k$ sets cover at most an $(1 - e^{-\alpha} + \eps)$ fraction of $E$. 
\end{itemize}
\label{cor:mkc}
\end{corollary}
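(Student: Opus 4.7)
The plan is to reuse verbatim the reduction from the proof of Theorem~\ref{thm:mkc} and argue that the soundness analysis already established there yields the stronger guarantee, once it is fed into the fractional strengthening of Feige's hardness. Concretely, I would first revisit the Chernoff union bound in the soundness analysis and observe that what is actually proved is a \emph{transfer statement} that is uniform in $|S|$: with high probability over the random construction, for every $S \subseteq V$ in the new instance, setting $\alpha_v := |S \cap (v \times [B])|/B$ (so that $\sum_v \alpha_v = |S|/B$), the fraction of hyperedges of the new instance covered by $S$ is at most $\eps$ larger than the expected fraction of hyperedges of $H_0$ hit when each $v \in V_0$ is included independently with probability $\alpha_v$. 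The crucial point is that in the proof of Theorem~\ref{thm:mkc} the union bound is taken over $e \in E_0$ and over configurations $(C_1, \dots, C_r) \in (2^{[B]})^r$ for each $e$; this count is $|E_0| \cdot 2^{Br}$ and does not depend on $|S|$, so the transfer statement extends to all $S$ with $|S| \leq \alpha k B$ at no additional cost for any constant $\alpha > 0$.

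Second, I would invoke the standard fractional strengthening of Feige's hardness, namely that in his soundness case, for every $\mathbf{x} \in [0,1]^{V_0}$ with $\sum_v \mathbf{x}_v \leq \alpha k$, the expected fraction of hyperedges of $H_0$ hit when each $v$ is independently included with probability $\mathbf{x}_v$ is at most $1 - e^{-\alpha} + \eps$. This is an immediate consequence of the partition system underlying Feige's construction: for an $(m, L)$-partition system, a random sub-collection of expected size $\alpha L$ covers exactly a $(1 - (1 - 1/L)^{\alpha L}) \leq (1 - e^{-\alpha})$ fraction of the ground set in expectation, and the loss from approximating $(1 - 1/L)^{\alpha L}$ by $e^{-\alpha}$ is absorbed into the extra $\eps$. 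Combining the two steps, given any $S \subseteq V$ in the new instance with $|S| \leq \alpha k B$, the corresponding $\alpha_v$'s satisfy $\sum_v \alpha_v \leq \alpha k$, so by the fractional soundness of Feige's instance the random-rounding expectation is at most $1 - e^{-\alpha} + \eps$, and by the transfer statement $S$ covers at most $1 - e^{-\alpha} + 2\eps$ fraction of the new hyperedges. Rescaling $\eps$ gives the bound stated in the corollary.

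The main obstacle in my view is not the Chernoff side of the argument, which transfers with no change, but rather pinning down the exact fractional form of Feige's soundness at all budgets $\alpha k$ simultaneously. One needs to be a bit careful that the $\eps$ slack in the partition system analysis is uniform over $\alpha$ in a bounded constant range, and that rounding between the fractional and integral $|S'| \leq \alpha k$ versions does not introduce losses larger than $\eps$. Both of these are routine given the explicit form of Feige's construction, so the strengthening from Theorem~\ref{thm:mkc} to Corollary~\ref{cor:mkc} is essentially bookkeeping on top of the proof already written.
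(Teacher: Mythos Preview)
Your overall plan coincides with the paper's: reuse the construction and the Chernoff union bound from Theorem~\ref{thm:mkc} as a transfer statement (which, as you correctly note, is uniform in $|S|$ since the union bound is over $e \in E_0$ and configurations $(C_1,\dots,C_r)$, not over $S$), and then feed in a stronger soundness guarantee for the base instance $H_0$.

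The gap is in how you justify that stronger base soundness. You claim that for every $\mathbf{x}\in[0,1]^{V_0}$ with $\sum_v \mathbf{x}_v \le \alpha k$, the multilinear coverage $\tfrac{1}{m}\sum_{e\in E_0}\bigl(1-\prod_{v\in e}(1-\mathbf{x}_v)\bigr)$ is at most $1-e^{-\alpha}+\eps$, and you argue this is ``an immediate consequence of the partition system'': a \emph{random} sub-collection of expected size $\alpha L$ covers a $(1-(1-1/L)^{\alpha L})$ fraction in expectation. But that is a statement about a uniformly random (hence symmetric) sub-collection of one partition system; what you need is a bound for \emph{every} fractional $\mathbf{x}$ on the full Feige instance, where an adversary can concentrate mass on ``good'' sets across many partition systems. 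The partition-system calculation alone does not rule this out --- that is precisely what the Label Cover soundness analysis is for --- and as stated your step does not go through.

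The paper closes this gap differently: it cites Guha and Khuller~\cite{GuK99}, who derive from the tight $((1-\eps)\ln n)$-hardness of Set Cover (now polynomial-time via Dinur--Steurer~\cite{dinur2014analytical}) the \emph{integral} curve ``any $\alpha k$ sets cover at most a $(1-e^{-\alpha}+\eps)$ fraction'' for every constant $\alpha>0$, in general set systems. Then the transfer statement from Theorem~\ref{thm:mkc} carries this curve to the high-girth instance with only an additive $\eps$ loss. If you want the fractional formulation you wrote, you can recover it from the integral curve by $\mathbb{E}[\text{coverage}(S)]\le \mathbb{E}\bigl[1-e^{-|S|/k}+\eps\bigr]\le 1-e^{-\mathbb{E}|S|/k}+\eps$ via Jensen, since $t\mapsto 1-e^{-t/k}$ is concave; but the input to that inequality is Guha--Khuller, not a raw partition-system fact.
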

\begin{proof}
Guha and Khuller~\cite{GuK99} proved that the same soundness for general set systems. 
Their result uses a tight $((1 - \eps) \ln n)$-hardness of Set Cover whose reduction took time $n^{O(\log \log n)}$ at that time, but the running time became polynomial~\cite{dinur2014analytical}.
The proof of Theorem~\ref{thm:mkc} indeed shows that the maximum fraction of elements covered by any $\beta$ fraction of sets in the new set system is at most $\eps$ plus the same quantity in the original set system, so we can transfer this strong hardness for general set systems to set systems of high girth, up to an additive $\eps$ factor. 
\end{proof}

\section{Inapproximability of \minsum in General metric}\label{sec:minsum}

In this section, we use Theorem~\ref{thm:mkc} to prove hardness of \minsum clustering. 
The reduction is simple; given an instance $(U, E, k)$ of \mkc, we output the corresponding instance of \minsum consisting a graph $G = (U, E')$ where $v, u \in U$ have an edge if and only if there exists $S \in E$ that contains both $u$ and $v$. Therefore, if each cluster is a clique of $G$, then each pairwise distance is $1$, and if it is a sparse subgraph of $G$, then the average pairwise distance is approximately at least $2$. 
Using the large girth guarantee in Theorem~\ref{thm:mkc}, we prove that any dense induced subgraph of a certain size must correspond to elements covered by a single set, so that any good solution for \minsum implies a good solution for \mkc. Since the objective function considers all pairwise distances in each cluster, more technical calculations are needed to prove a better inapproximability factor.

\iffalse
\begin{theorem}
For any $\eps > 0$, there exists $r = O_{\eps}(1)$ such that given a graph $G = (V, E)$ with $n = |V|$ and $k \in \NN$ that satisfy $n = (1 - o(1))kr$, it is hard to NP-hard to distinguish
\begin{itemize}
\item YES: $V$ can be partitioned into $k$ cliques of size at most $r$.
\item NO: For any partition $V_1, \dots, V_{k}$ of $V$ satisfying $\max_i |V_i| \leq r$, average density of $V_i$ is at most $(1 - 1/e + \eps)$. 
\end{itemize}
\label{thm:graph}
\end{theorem}
\fi

\begin{theorem}[Restatement of Theorem~\ref{thm:introminsum}]\label{thm:minsum}
\begin{sloppypar}Given $n$ points  in  $O(\log n)$ dimensional $\ell_{\infty}$-metric space it is \NP-hard (under randomized reductions) to distinguish between the following two cases:
\begin{itemize}
\item \textbf{\emph{Completeness}}:  
The \minsum objective is at most $\beta$,
\item \textbf{\emph{Soundness}}: The \minsum objective  is at least $\minsumvalue \cdot \beta$,
\end{itemize}
where $\beta$ is some positive real number depending only on $n$. \end{sloppypar}
\end{theorem}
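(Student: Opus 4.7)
The plan is to reduce from the \mkc hardness with large girth of Corollary~\ref{cor:mkc}. Given its instance $(U,E,k)$ with universe size $N$, constant maximum set size and element degree, girth $\Omega(\log N/\log\log N)$, and the distributional soundness $\sum_{i\leq\alpha k} b_{(i)}\leq (1-e^{-\alpha}+\eps)N$ for every constant $\alpha>0$, I would build the graph $G=(U,E_G)$ whose edges are the pairs $\{u,v\}$ lying together inside some $S\in E$. The \minsum points are the elements of $U$ carrying the shortest-path metric of $G$; since $G$ has bounded degree and every subgraph of size below the girth is a forest, a standard Fr\'echet-type embedding in $O(\log n)$ coordinates realizes this metric in $\ell_\infty^{O(\log n)}$ up to a $1+o(1)$ distortion, which is harmless against the constant gap $\minsumvalue$.

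For completeness, I take the promised cover $S_1^*,\dots,S_k^*\in E$ and assign each $u\in U$ to one $S_i^*$ containing it. The resulting clusters $C_i\subseteq S_i^*$ are cliques of $G$, so every within-cluster pair is at distance exactly $1$ and the \minsum objective evaluates to $\beta=\sum_i\binom{|C_i|}{2}$.

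For soundness, consider any partition $C_1,\dots,C_k$ of $U$. Because in the graph metric every edge contributes $1$ and every non-edge contributes at least $2$ to a within-cluster distance,
\[
\cost(C_i)\ \geq\ e(C_i)\cdot 1+\Bigl(\binom{|C_i|}{2}-e(C_i)\Bigr)\cdot 2\ =\ 2\binom{|C_i|}{2}-e(C_i),
\]
so the total cost is at least $2P-W$ with $P=\sum_i\binom{|C_i|}{2}$ and $W=\sum_i e(C_i)$. I would bound $W$ using two consequences of the girth: any two sets in $E$ intersect in at most one element, so $W=\sum_{S\in E}\sum_i\binom{|S\cap C_i|}{2}$ decomposes cleanly and without double counting; and any cluster of size less than the girth induces a forest, forcing $e(C_i)\leq |C_i|-1$ for such clusters. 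Coupled with the \mkc distributional soundness applied to the best representatives $S_i^*\in\arg\max_{S\in E}|S\cap C_i|$ (which constrains the sorted profile $b_i=|S_i^*\cap C_i|$), these facts turn the soundness statement into a constrained optimization over cluster sizes $\{|C_i|\}$ and intersection profiles $\{b_i\}$.

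The main obstacle will be wringing out the precise constant $\minsumvalue$. Using only the one-point inequality $\sum_i b_i\leq (1-1/e+\eps)N$ together with the forest bound yields a ratio of merely $1+1/e\approx 1.368$; closing the further gap up to $\minsumvalue$ requires invoking the entire distributional inequality of Corollary~\ref{cor:mkc} for all $\alpha>0$, which effectively controls the full sorted profile of the $b_i$'s rather than just their sum, and then carefully balancing the adversary's trade-off between nearly-clique clusters (per-cluster ratio close to $1$, but whose total mass is cut off by the \mkc constraint) and sparse clusters (per-cluster ratio approaching $2$, which must absorb the remaining mass). Solving the resulting convex program is the step that produces the claimed factor $\minsumvalue$.
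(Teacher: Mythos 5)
Your reduction, completeness analysis, and the skeleton of the soundness argument (cost $\geq 2P-W$, edge-disjoint cliques from girth $>4$, the forest structure of small clusters, and the distributional soundness of Corollary~\ref{cor:mkc} feeding a continuous optimization) all coincide with the paper's proof of Theorem~\ref{thm:minsum}. However, two of your steps are genuinely problematic. First, the forest consequence is misstated: what the girth bound gives is that the incidence bipartite graph restricted to $C_i$ and the sets meeting it in $\geq 2$ elements is a forest, and the resulting edge bound is $e(C_i)=\sum_j\binom{a'_j}{2}\leq r'n'/2$ (root each tree at an element and charge $\binom{a'_j}{2}$ to the $a'_j-1$ children of the set $S'_j$, at $a'_j/2\leq r'/2$ each), together with the refinement $e(C_i)\leq (r')^2/2+(n'-r')^2/2$ when $r'>n'/2$. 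Your claim $e(C_i)\leq|C_i|-1$ is false -- a cluster equal to a single set of size $r$ has $\binom{r}{2}$ induced edges -- and if it were true it would contradict your own completeness analysis. The correct two-piece bound $\max\bigl(n'^2-r'n'/2,\ n'^2/2+n'r'-(r')^2\bigr)$ is not cosmetic: both pieces are active in the optimization and both are needed to reach $\minsumvalue$. Second, you never dispose of clusters of size exceeding the girth, where the forest argument is unavailable; the paper handles these by observing that bounded degree forces such clusters to have density $o(1)$, hence cost $(2-o(1))\binom{n'}{2}$ each, so that they can absorb only an $o(1)$ fraction of points without making the objective superconstant relative to the completeness cost.

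Beyond these repairs, the step you defer as ``the main obstacle'' is in fact the bulk of the proof. The paper substitutes $r_i=r\cdot e^{-i/k}$ (the worst profile permitted by Corollary~\ref{cor:mkc}, since the per-cluster lower bound is decreasing in $r_i$), passes to the continuum with $\sfn_\alpha, \sfr_\alpha$, and solves a calculus-of-variations problem with the piecewise objective $f=\max(f_1,f_2)$: the optimality conditions force $\sfn_\alpha=\sfr_\alpha$ on an initial interval $[0,c]$ and explicit formulas on three further intervals, with $c$ determined by the mass constraint $\int_0^1\sfn_\alpha\,d\alpha=1$ via a Lambert-$W$ equation, yielding a soundness cost of $0.7079\,nr$ against $nr/2$ and hence the ratio $\minsumvalue$. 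Your proposal correctly identifies that the full distributional inequality is needed and that a single-constraint argument falls short, but without executing this computation the claimed constant is not established.
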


\begin{proof}
Given an instance $\calS$ of \mkc promised in Theorem~\ref{thm:mkc}, where
the maximum set size $r = O(1)$ and the incidence bipartite graph has max degree $O(1)$ and girth $t = \Omega(\log n / \log \log n)$, 
let $G = (V, E)$ be the graph where $V$ consists of elements, and for each set $S$, we put a clique on its elements. Since the girth of the set system is at least $2$, these cliques are all edge disjoint. 
Note that $n = (1 - o(1))kr$ from Theorem~\ref{thm:mkc}.
The instance for \minsum clustering is the shortest metric on $G$ along with the same $k$. 

Indeed, since our analysis only uses distances $1$ and $2$, we can change all distance greater than $2$ to $2$.
Guruswami and Indyk~\cite{GI03} showed that any $\{1, 2\}$-metric where each point has only $O(1)$ other points at distance $1$ can be embedded to $O(\log n)$-dimensional $\ell_{\infty}$ space, which can be applied to our metric because each vertex in $G$ only has $O(1)$ neighbors. 

\paragraph{Completeness.} In the completeness case of Theorem~\ref{thm:mkc}, we can partition $G$ into $k$ cliques, each of size at most $r$. 
The clustering cost is then at most $k \cdot \binom{r}{2} \leq (1 + o(1))nr/2$.

\paragraph{Soundness.} Fix $V' \subseteq V$ and let $n' := |V'|$. Consider $V'$ as one cluster. We will bound the \minsum cost of $V'$ as one cluster. 
Consider the following cases. 

\begin{enumerate} 
\item $n' \leq t / 2$: Consider the set system $\calS'$ {\em induced by (in the bipartite graph sense) } $V' \cup \{ S : |S \cap V'| \geq 2, S \in \calS \}$.
The corresponding bipartite graph is acyclic, so a forest. 
Let $S'_1, \dots, S'_{m'}$ be the sets of this restricted system, and let $a'_i := |S'_i|$.
Let $r' := \max_i a'_i$. 

We want to upper bound $\sum_i \binom{a'_i}{2}$.
For each tree in the forest, root it at an arbitrary element vertex. For each $S'_i$ we get $\binom{a'_i}{2} = a'_i (a'_i - 1) / 2$. Charge this to its $a'_i - 1$ children, $a'_i / 2$ each. 
Since $a'_i \leq r'$, every element vertex is charged at most $r' / 2$. 
This shows that $\sum_i \binom{a'_i}{2} \leq r' n' / 2$. 
When $r' > n' / 2$, using the fact that all other $a'_i \leq n' - r'$, we have a better bound of 
$(r')^2 / 2 + (n' - r')^2 / 2$. 
Note that $\sum_i \binom{a'_i}{2}$ is exactly the number of edges in the subgraph of $G$ induced by $V'$. 
Therefore, the cost of $V'$ is at least
\begin{align*}
&2 \cdot \binom{n'}{2} - \min(r' n' / 2, \quad (r')^2 / 2 + (n' - r')^2 / 2) \\
=& (1 - o_r(1)) \max\bigg( ((n')^2 - n' r' / 2), \quad (n')^2 / 2 + n' r' - (r')^2 \bigg).
\end{align*}
Here $o_r(1)$ denotes a quantity decreasing to $0$ as $r$ increases. By taking $r$ large enough (but still) constant, we can ignore up to an arbitrarily small additive factor in the final inapproximability ratio. 

\item If $n' > t / 2$. Since (the bipartite graph of) $\calS$ has degree $O(1)$, $G$ also has degree $O(1)$. 
Therefore, if $V' \subseteq V$ has $n' = |V'| \geq t / 2 = \Omega(\log n / \log \log n)$, the induced graph $G_{V'}$ has density at most $o(1)$, so the cost is at least 
$(2 - o(1)) \binom{n'}{2}$. 
\end{enumerate}

Now we compute the \minsum cost for a $k$-clustering. 
Let $V_1, \dots, V_k$ be a partition of $V$ and let $n_i := |V_i|$. 
Let $r_i$ be the largest clique size in $G_{V_i}$ (same as $r'$ in the case (1)).

Suppose that $n_i \geq t/2$ for each $i \in [\ell]$. The total cost from these $\ell$ clusters is at least 
$
(2 - o(1))  \sum_{i = 1}^{\ell} \binom{n_i}{2}.
$
If $\sum_{i = 1}^{\ell} n_i = \Omega(n / \log \log n)$, since $t = \Omega(\log n / \log \log n)$ \allowdisplaybreaks
\[
(2 - o(1)) \sum_{i = 1}^{\ell} \binom{n_i}{2}
\geq (2 - o(1)) \cdot \frac{t - 1}{2} \cdot \sum_{i = 1}^{\ell} n_i = \Omega(n \log n / (\log \log n)^2),
\]
which is superconstant times larger than the cost $(1 + o(1)) nr/2$ in the completeness case. 
Therefore, we can conclude that clusters of size at least $t/2$ cover at most an $o(1)$ fraction of vertices, so up to an $(1 - o(1))$ factor we can assume that every $V_i$ satisfies $n_i \leq t_i / 2$. 
Then the above case 1 is applied for every $V_i$, so the total cost at least (again up to a $(1 - o_r(1))$ factor), 
\begin{equation}
\sum_{i = 1}^{k} \max(n_i^2  - r_i n_i  / 2, \,\, n_i^2/2 + n_ir_i - r_i^2). 
\label{eq:minsumf}
\end{equation}

Let $f(n, r) := \max(f_1(n, r), f_2(n, r))$, with
$f_1(n,r) := n^2  - r n  / 2$ and $f_2(n, r) := n^2/2 + nr - r^2$. 
Note that $f_1(n, r) = f_2(n,r)$ when $r = n/2$. 
For any fixed $n_i$, it can be checked that 
$f(n_i, r_i)$ is decreasing in $r_i$. 
Therefore,~\eqref{eq:minsumf} is minimized when $r_i$'s are as large as possible. 
So we can apply Corollary~\ref{cor:mkc} and assume that the worst case for~\eqref{eq:minsumf} happens (up to an $(1+o(1))$ factor) when $r_i = r \cdot e^{-i/k}$. 

For the sake of exposition, we let $\sfn_{\alpha} := n_{\alpha k} / r$, $\sfr_{\alpha} := r_{\alpha k} / r$ for $\alpha \in [0, 1]$. 
So~\eqref{eq:minsumf} becomes
\begin{equation}
\sum_{i = 1}^{k} f(n_i, r_i) = 
k r^2 \sum_{i = 1}^{k} \bigg( f(\sfn_{i/k}, \sfr_{i/k}) \cdot (1/k) \bigg)
= (1 \pm o(1)) nr \cdot \int_{\alpha =0}^1 f(\sfn_{\alpha}, \sfr_{\alpha}) d\alpha,
\label{eq:minsumff}
\end{equation}
where we use linear interpolation to extend $f$ to all $[0, 1]$. Given $\sfr(\alpha) = e^{-\alpha}$, we find the best $(\sfn_{\alpha})_{\alpha \in [0, 1]}$ to minimize~\eqref{eq:minsumff}.
There are three requirements for $(\sfn_{\alpha})$. 
\begin{enumerate}
\item $\sfn_{\alpha} \geq \sfr_{\alpha}$ for all $\alpha \in [0, 1]$. 
\item $\int_{\alpha = 0}^1 \sfn_{\alpha} = 1$. 
\item There exists $t > 0$ such that for all $\alpha \in [0, 1]$, one of the following must hold, because otherwise we can decrease one $n_{\alpha}$ and increase another $n_{\alpha'}$ to further decrease \eqref{eq:minsumff}. Note that $\frac{df_1(\sfn, \sfr)}{d\sfn} = 2\sfn - \sfr/2$ and 
$\frac{df_2(\sfn, \sfr)}{d\sfn} = \sfn + \sfr$. 

\begin{itemize}
\item If $\sfn_{\alpha} = \sfr_{\alpha}$, $\frac{df(\na, \ra)}{d\na} = \frac{df_2(\na, \ra)}{d\na} = 2\na \geq t$. 
\item If $\sfn_{\alpha} = 2\sfr_{\alpha}$, 
$\frac{df(\na^{-}, \ra)}{d\na} = \frac{df_2(\na, \ra)}{d\na} = 1.5\na \leq t$ and 
$\frac{df(\na^{+}, \ra)}{d\na} = \frac{df_1(\na, \ra)}{d\na} = 1.75\na \geq t$. 
\item Otherwise, $\frac{f(\sfn_{\alpha}, \sfr_{\alpha})}{d\sfn_{\alpha}} = t$. 
\end{itemize}

\end{enumerate}
It is easy to see that $t < 2$, because otherwise $\sfn_{\alpha} > 1$ for all $\alpha \in (0, 1]$, violating the condition 2. 
This implies that $t = 2\exp(-c)$ for some $c > 0$ to be determined and
\begin{equation}
\sfn_{\alpha} = \sfr_{\alpha} \mbox{ for all } \alpha \in [0, c].
\label{eq:minsum1}
\end{equation}
Since $f(\sfn, \sfr) = f_2(\sfn, \sfr)$ when $\sfn \leq 2\sfr$ and $f_1(\sfn, \sfr)$ otherwise, 
to meet the condition 3, we have the following conditions.

\noindent Whenever $\ra < \na < 2\ra$ (which implies $f(\na, \ra) = f_2 (\na, \ra)$), 
\begin{equation}
\frac{df_2(\na, \ra)}{\na} = \na + \ra = t \Rightarrow \na = 2\exp(-c) - \exp(-\alpha).
\label{eq:minsum2}
\end{equation}
Whenever $\na > 2\ra$ (which implies $f(\na, \ra) = f_2 (\na, \ra)$), 
\begin{equation}
\frac{df_1(\na, \ra)}{\na} = 2\na - \ra/2 = t \Rightarrow \na = \exp(-c) + \exp(-\alpha) / 4. 
\label{eq:minsum3}
\end{equation}
To meet~\eqref{eq:minsum1},~\eqref{eq:minsum2}, and~\eqref{eq:minsum3}, 
$n_{\alpha}$ has to be 
\begin{equation}
n_{\alpha} =
\begin{cases}
\exp(-\alpha), & \alpha \in [0, c] \\
2\exp(-c) - \exp(-\alpha), & \alpha \in [c, d_1] \\
2\exp(-\alpha), & \alpha \in [d_1, d_2] \\
\exp(-c) + \exp(-\alpha) / 4, & \alpha \in [d_2, 1]
\end{cases}
\end{equation}
where $d_1 = \ln(3/2) + c$ and $d_2 = \ln(7/4) + c$ so that $f(\sfn_{\alpha}, \sfr_{\alpha}) = f_2(\sfn_{\alpha}, \sfr_{\alpha})$ for $\alpha \in [c, d_1]$ and 
$f(\sfn_{\alpha}, \sfr_{\alpha}) = f_1(\sfn_{\alpha}, \sfr_{\alpha})$ for $\alpha \in [d_2, 1]$. ($f_1(\sfn_{\alpha}, \sfr_{\alpha}) = f_2(\sfn_{\alpha}, \sfr_{\alpha})$ when $\alpha \in [d_1, d_2]$.)
Then 
\begin{align*}
& \int_{\alpha=0}^1 n_{\alpha} d\alpha \\
=& \int_{\alpha=0}^c e^{-\alpha} d\alpha 
	+ \int_{\alpha=c}^{d_1} (2e^{-c} - e^{-\alpha}) d\alpha 
	+ \int_{\alpha=d_1}^{d_2} 2 e^{-\alpha} d\alpha 
	+ \int_{\alpha=d_2}^1 (e^{-c} + e^{-\alpha} / 4) d\alpha \\
=& \bigg( 1-e^{-c}  \bigg) + 
	\bigg( 2\ln(3/2)e^{-c} - e^{-c} + e^{-d_1}  \bigg) + 
	2 \bigg( e^{-d_1} - e^{-d_2}  \bigg) + 
	\bigg( (1 - d_2)e^{-c} + (e^{-d_2} - e^{-1}) / 4  \bigg) \\
=& 1 - e^{-1}/4 + e^{-c} ( \ln (9/7) - c ) = 1.
\end{align*}
Where the third equality uses the definitions of $d_1$ and $d_2$. 
This implies $e^{-c} ( \ln (9/7) - c ) = e^{-1} / 4$, which solves to $c = \ln(9/7) - W(9/28e) \approx 0.145$ where $W(z)$ denotes the real solution of $z = We^W$. Plugging this value into 
\[
\int_{\alpha=0}^1 f(\na, \ra) d\alpha 
= \int_{\alpha=0}^{d_2} f_2(\na, \ra) \alpha
+ \int_{\alpha=d_2}^{1} f_1(\na, \ra) \alpha
\]
gives $\geq 0.7079$. Therefore, the \minsum cost in the soundness case is at least $(0.7079 - o(1)) nr$. 
Compared to the cost $(1/2 + o(1))nr$ in the completeness case, the gap is $\geq \minsumvalue$.
\end{proof}

\iffalse
Given $r_i = r \cdot e^{-(k-i)/k}$, the worst choice $n_i$ should satisfy $\sum_{i=1}^k n_i = (1 - o(1))nk$, and 
\[
\frac{n_i ( n_i - \frac{r_i}{2} )}{d n_i} = 2n_i - \frac{r_i}{2}
\]
is constant for all $i$. This is satisfied by $n_i = \frac{3 + 1/e}{4}\cdot r + r_i / 4$, that yields
\begin{align*}
&\sum_{i = 1}^{k} n_i \bigg( n_i - \frac{r_i}{2} \bigg) 
= 
\sum_{i = 1}^{k} \bigg( \frac{(3 + 1/e)r}{4}+\frac{r_i}{4} \bigg) \bigg(\frac{(3 + 1/e)r}{4} - \frac{r_i}{4}\bigg)  \\
=& 
(1 - o(1)) nr \int_{\alpha = 0}^1 
\bigg( \frac{3 + 1/e}{4}+\frac{e^{-\alpha}}{4} \bigg) \bigg(\frac{3 + 1/e}{4} - \frac{e^{-\alpha}}{4}\bigg) \\
=& 
(1 - o(1)) nr \int_{\alpha = 0}^1 
\bigg( \frac{9 + 6/e + 1/e^2}{16} - \frac{e^{-2\alpha}}{16} \bigg)  \\
=& 
(1 - o(1)) nr \cdot \frac{17 + 12/e + 3/e^2}{32}.
\end{align*}

Therefore, the gap between the completeness case and the soundness case is 
\[
\frac{17 + 12/e + 3/e^2}{16} \approx 1.3637.
\]
\fi

\section{Inapproximability of Continuous \kmean and \kmed in $\ell_\infty$-metric}\label{sec:cont}

In this section, we prove the highest inapproximability factor known for \kmean and \kmed  in literature (in any metric), i.e., we prove Theorem~\ref{thm:introellinf}. The proof relies crucially on the following result of Khot and Saket.

\begin{theorem}[Khot and Saket \cite{KS12}]\label{thm:KS12}
For any constant $\varepsilon>0$, and positive integers $t$ and $q$ such that $q \ge 2^t + 1$, given a graph $G(V, E)$, it is \NP-hard to distinguish between the following two cases:
\begin{itemize}
\item \textbf{\emph{Completeness}}:  There are $q$ disjoint independent sets $V_1,\ldots ,V_q \subseteq V$, such that for all $i\in[q]$ we have $|V_i| = \frac{(1-\varepsilon)}{q}\cdot |V|$.
\item \textbf{\emph{Soundness}}: There is no independent set in $G$ of size $\frac{1}{q^{t+1}}\cdot |V|$.
\end{itemize}
\end{theorem}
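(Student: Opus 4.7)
The plan is to reduce from Smooth Label Cover, relying on its $(1,\delta)$-gap \NP-hardness for arbitrarily small soundness $\delta$ and arbitrarily high smoothness parameter $J$. For each Label Cover vertex $v$ with alphabet $[R]$, I would replace $v$ by a cloud $C_v$ in the target graph $G$ on which the edges encode a dictatorship-style test having exactly $q$ intended accepting configurations.

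Concretely, fix an alphabet $\Sigma$ with $|\Sigma|=2^t+1$ and pick $q$ distinguished elements $s_1,\dots,s_q\in\Sigma$, which is possible since $q\le 2^t+1$. For each Label Cover vertex $v$, let $C_v=\Sigma^R$. The intended independent sets correspond to generalized dictators: for a labeling $\ell$ and each $j\in[q]$, set
\[
V_j \;=\; \bigcup_{v}\bigl\{(v,x)\in C_v : x_{\ell(v)} = s_j\bigr\}.
\]
Intra-cloud edges are added so that any independent set inside $C_v$ is essentially supported on at most $t$ coordinates. Inter-cloud edges associated to a Label Cover edge with projection $\pi$ are added by a noisy projection test, so that crossing independent sets ``decode'' to a consistent pair of labels on the two sides.

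For completeness, given a satisfying Label Cover labeling $\ell$, the $q$ sets $V_1,\dots,V_q$ above are pairwise disjoint and independent in $G$, and each has relative density exactly $1/q$. A standard $\eps$-perturbation to handle boundary losses in the projection test then yields the required lower bound $(1-\eps)|V|/q$.

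For soundness, suppose there is an independent set $S\subseteq V(G)$ of size at least $|V(G)|/q^{t+1}$. By averaging, a constant fraction of clouds $C_v$ satisfy $|S\cap C_v|\ge|C_v|/q^{t+1}$. The intra-cloud edges then force $S\cap C_v$ to be concentrated on at most $t$ ``heavy'' coordinates, giving a short-list label for $v$. The inter-cloud edges, together with the smoothness of Label Cover (which ensures that a random projection does not collapse the $t$ heavy coordinates with high probability), guarantee that the decoded lists on both sides of most Label Cover edges are consistent under projection, producing a randomized labeling that satisfies a constant fraction of constraints --- contradicting the $\delta$ soundness for $\delta$ small enough. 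The main obstacle is exactly this intra-cloud spectral/combinatorial lemma: proving that any independent set of density $1/q^{t+1}$ inside $\Sigma^R$ must be (up to constants) a dictator on at most $t$ coordinates. The constraint $q\ge 2^t+1$ is what provides enough room inside $\Sigma$ for the gadget to simultaneously admit $q$ disjoint dictator-type independent sets while ruling out all other large ones.
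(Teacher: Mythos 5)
This theorem is not proved in the paper at all: it is imported verbatim from Khot and Saket \cite{KS12} and used as a black box (the paper's entire contribution in Section~\ref{sec:cont} is the reduction \emph{from} this statement to continuous \kmean/\kmed). So there is no in-paper proof to compare against; the only fair comparison is with the actual Khot--Saket argument.

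Your outline does follow the general architecture of that argument (smooth Label Cover, a long-code-style cloud $\Sigma^R$ per vertex, intended independent sets given by dictators, list-decoding of large independent sets to at most $t$ coordinates, and consistency via smoothness). But as a proof it has a genuine gap that you yourself flag: the ``intra-cloud'' lemma --- that any independent set of relative density $q^{-(t+1)}$ inside the gadget must be concentrated on at most $t$ coordinates --- is the entire technical content of the theorem, and you neither state the gadget's edge set precisely nor prove this lemma. In Khot--Saket this step is not a routine influence/junta argument; it rests on a delicate Fourier-analytic and combinatorial analysis (a strengthening of the ``It Ain't Over Till It's Over''-type structure theorems together with the smoothness of the outer verifier), and the specific exponent $t+1$ in the soundness $q^{-(t+1)}$ comes out of exactly that analysis. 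Without it, the proposal is a plan rather than a proof. There is also a sign slip: you justify choosing $q$ distinguished elements of $\Sigma$ with $|\Sigma|=2^t+1$ ``since $q\le 2^t+1$,'' whereas the theorem hypothesizes $q\ge 2^t+1$; as written your gadget only accommodates the boundary case $q=2^t+1$, and the completeness and soundness requirements are not monotone in $q$ in a way that lets you wave this off.
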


We are now ready to prove the main result of this section. 

\begin{theorem}[\kmean without candidate centers in $n^{O(1)}$ dimensional $\ell_\infty$-metric space]\label{thm:kmeanellinfhighdim} For any constant $\varepsilon>0$ and any constant $\alpha\in\NN$, there exists a constant $k:=k(\eps,\alpha)\in\mathbb{N}$, such that 
given a point-set $\Po\subset \R^m$ of size $n$ (and $m=\poly(n)$), it is \NP-hard to distinguish between the following two cases:
\begin{itemize}
\item \textbf{\emph{Completeness}}:  
There exists $\C':=\{c_1,\ldots ,c_k\}\subseteq \mathbb R^m$ and $\sigma:\Po\to\C'$ such that $$\sum_{a\in\Po}\left(\|a-\sigma(a)\|_\infty\right)^2\le (1+8\varepsilon)\cdot n,$$
\item \textbf{\emph{Soundness}}: For every $\C':=\{c_1,\ldots ,c_{\alpha k}\}\subseteq \mathbb R^m$ and every $\sigma:\Po\to\C'$ we have: $$\sum_{a\in\Po}\left(\|a-\sigma(a)\|_\infty\right)^2\ge (4-\varepsilon)\cdot n.$$
\end{itemize}
\end{theorem}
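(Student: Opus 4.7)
The plan is to reduce from the $q$-coloring hardness of Khot--Saket (Theorem~\ref{thm:KS12}), setting $k := q$ and choosing constants $q, t$ (depending on $\eps, \alpha$) with $q \ge 2^t + 1$ and $q^t \ge 4\alpha/\eps$. Given a graph $G = (V, E)$ on $n$ vertices from Theorem~\ref{thm:KS12}, I would orient each edge arbitrarily, introduce a coordinate $i_e$ per edge, and define the point set $\Po = \{p_v : v \in V\} \subset \R^{|E|}$ by setting, for each directed edge $e = (u, v)$,
\[
p_u[i_e] = 0, \quad p_v[i_e] = 4, \quad p_w[i_e] = 2 \text{ for every } w \in V \setminus \{u, v\}.
\]
Under this encoding, any two adjacent vertices are at $\ell_\infty$-distance exactly $4$ (the gap appears on $i_{uv}$), while any two non-adjacent vertices are at $\ell_\infty$-distance at most $2$.

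For \emph{completeness}, I would take the $q$ disjoint independent sets $V_1, \dots, V_q$ from the completeness case of Theorem~\ref{thm:KS12}, distribute the $\eps n$ leftover vertices arbitrarily to form clusters $C_1, \dots, C_q$, and choose each center $c_i$ coordinate-by-coordinate: on every coordinate $i_e$ such that $e$ has a (unique) endpoint $v \in V_i$, set $c_i[i_e] := p_v[i_e] \pm 1 \in \{1, 3\}$; on all other coordinates set $c_i[i_e] = 2$. Every $v \in V_i$ then satisfies $\|p_v - c_i\|_\infty = 1$ (each incident coordinate contributes exactly $1$, all others contribute $0$), while every leftover vertex contributes at most $9$ because each coordinate difference is at most $3$. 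Summing yields
\[
\sum_v \|p_v - c_{\sigma(v)}\|_\infty^2 \;\le\; (1-\eps)n + 9\eps n \;=\; (1+8\eps)n.
\]

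For \emph{soundness}, consider any $\alpha k = \alpha q$ clusters $C_1, \dots, C_{\alpha q}$ with centers $c_1, \dots, c_{\alpha q}$. For each internal edge $(u, v)$ of $G$ with $u, v \in C_i$, looking only at coordinate $i_{uv}$ gives
\[
\|p_u - c_i\|_\infty^2 + \|p_v - c_i\|_\infty^2 \;\ge\; c_i[i_{uv}]^2 + (4 - c_i[i_{uv}])^2 \;\ge\; 8,
\]
with the minimum attained at $c_i[i_{uv}] = 2$. Summing this bound over a maximum matching $M$ of the induced subgraph $G_{C_i}$ (whose matched endpoints are distinct, so the squared distances add) gives $\cost(C_i) \ge 8\,\mu(G_{C_i})$. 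Since the vertices missed by any maximum matching form an independent set, $\mu(G_{C_i}) \ge (|C_i| - \alpha(G_{C_i}))/2 \ge (|C_i| - \alpha(G))/2$, so $\cost(C_i) \ge 4(|C_i| - \alpha(G))$. Summing over $i$ and invoking $\alpha(G) < n/q^{t+1}$,
\[
\sum_{i=1}^{\alpha q} \cost(C_i) \;\ge\; 4n - 4\alpha q \cdot \alpha(G) \;\ge\; 4n - \frac{4\alpha n}{q^t} \;\ge\; (4-\eps)n,
\]
by the choice $q^t \ge 4\alpha/\eps$.

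The hard part is making each cluster's cost scale with $|C_i|$ rather than merely with the number of its internal edges---this is precisely what pushes the soundness up to $(4-\eps)n$ rather than some weaker matching-count bound. The key combinatorial ingredient is the Gallai-type fact that the unmatched vertices of a maximum matching form an independent set: since the Khot--Saket soundness guarantees that $\alpha(G)$ is a vanishing fraction of $n$, almost every vertex of every cluster is matched to a cluster neighbor, and each such pair contributes $\ge 8$ to the squared $\ell_\infty$ cost. The same construction with values $\{0, 2\}$ in place of $\{0, 4\}$ yields the parallel $(2-\eps)$-factor hardness for continuous \kmed.
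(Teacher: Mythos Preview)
Your proposal is correct and essentially identical to the paper's proof. Your embedding $\{0,2,4\}$ is just an affine shift of the paper's $\{-2,0,2\}$, your completeness centers match the paper's coordinate-wise definition, and your soundness argument via maximum matching (unmatched vertices form an independent set, hence $\mu(G_{C_i}) \ge (|C_i|-\alpha(G))/2$) is exactly the paper's Claim~\ref{cl:so}, which is proved by greedily peeling off edges until fewer than $\eps'|V|$ vertices remain.
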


\begin{theorem}[\kmed without candidate centers in $n^{O(1)}$ dimensional $\ell_\infty$-metric space]\label{thm:kmedellinfhighdim} For any constant $\varepsilon>0$ and any constant $\alpha\in\NN$, there exists a constant $k:=k(\eps,\alpha)\in\mathbb{N}$, such that  
given a point-set $\Po\subset \R^m$ of size $n$ (and $m=\poly(n)$), it is \NP-hard to distinguish between the following two cases:
\begin{itemize}
\item \textbf{\emph{Completeness}}:  
There exists $\C':=\{c_1,\ldots ,c_k\}\subseteq \mathbb R^m$ and $\sigma:\Po\to\C'$ such that $$\sum_{a\in\Po}\|a-\sigma(a)\|_\infty\le (1+2\varepsilon)\cdot n,$$
\item \textbf{\emph{Soundness}}: For every $\C':=\{c_1,\ldots ,c_{\alpha k}\}\subseteq \mathbb R^m$ and every $\sigma:\Po\to\C'$ we have: $$\sum_{a\in\Po}\|a-\sigma(a)\|_\infty\ge (2-\varepsilon)\cdot n.$$
\end{itemize}
\end{theorem}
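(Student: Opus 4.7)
The plan is to reduce from Khot--Saket's inapproximability of independent-set cover (Theorem~\ref{thm:KS12}), setting $q := k$ and choosing $t$ so that (i) Khot--Saket's constraint $q \ge 2^{t}+1$ holds and (ii) $\alpha \cdot q^{-t} < \eps/4$. Both conditions can be met simultaneously by taking $k$ sufficiently large as a function of $\alpha$ and $\eps$ (e.g.\ with $t = \lfloor \log_{2} k \rfloor$). Given the graph $G=(V,E)$ on $n$ vertices promised by Theorem~\ref{thm:KS12}, I would embed each vertex $v \in V$ as a point $p_{v} \in \R^{n}$ (so $m=n$) with coordinates indexed by $V$ and
\[
(p_{v})_{u} := \begin{cases}
0 & \text{if } u = v,\\
2 & \text{if } u \neq v \text{ and } \{u,v\} \notin E,\\
4 & \text{if } \{u,v\} \in E.
\end{cases}
\]
A direct coordinate-wise verification gives $\|p_{u}-p_{v}\|_{\infty} = 2$ whenever $\{u,v\} \notin E$ and $\|p_{u}-p_{v}\|_{\infty} = 4$ whenever $\{u,v\} \in E$.

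For the completeness direction, I would use the disjoint independent sets $V_{1},\dots,V_{q}$ from Theorem~\ref{thm:KS12}, each of size $(1-\eps)n/q$, and define a center $c^{(i)} \in \R^{n}$ by $(c^{(i)})_{u} := 1$ if $u \in V_{i}$ and $(c^{(i)})_{u} := 3$ if $u \notin V_{i}$. A direct check shows $\|p_{v} - c^{(i)}\|_{\infty} = 1$ for every $v \in V_{i}$ (on each coordinate the difference is exactly $1$, crucially using that $V_{i}$ is independent), and $\|p_{v} - c^{(i)}\|_{\infty} = 3$ for every leftover vertex $v \notin \bigcup_{j}V_{j}$ assigned to cluster $i$ (the value $3$ is attained on coordinate $u=v$). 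Summing yields \kmed cost at most $(1-\eps)n\cdot 1 + \eps n \cdot 3 = (1+2\eps)n$ and \kmean cost at most $(1-\eps)n\cdot 1^{2} + \eps n \cdot 3^{2} = (1+8\eps)n$.

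For the soundness direction, fix any $\alpha k$ centers $c^{(1)},\dots,c^{(\alpha k)}$ inducing clusters $C_{1},\dots,C_{\alpha k}$. The key structural observation is that
\[
S_{j} \;:=\; \{\, v \in C_{j} \,:\, \|p_{v} - c^{(j)}\|_{\infty} < 2 \,\}
\]
is an independent set in $G$: if $u,v \in S_{j}$ were adjacent, the $\ell_{\infty}$-triangle inequality would give $4 = \|p_{u}-p_{v}\|_{\infty} \le \|p_{u}-c^{(j)}\|_{\infty} + \|c^{(j)}-p_{v}\|_{\infty} < 4$, a contradiction. Theorem~\ref{thm:KS12} then gives $|S_{j}| \le n/q^{t+1}$, so $\sum_{j}|S_{j}| \le \alpha q \cdot n/q^{t+1} = \alpha n/q^{t} < \eps n/4$. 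The remaining at least $(1-\eps/4)n$ points each contribute at least $2$ to the \kmed cost and at least $4$ to the \kmean cost, yielding a \kmed objective $\ge (2-\eps/2)n \ge (2-\eps)n$ and a \kmean objective $\ge (4-\eps)n$ after a small rescaling of $\eps$.

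The main obstacle is purely parametric: one must simultaneously respect the Khot--Saket constraint $q \ge 2^{t}+1$ and make $\alpha q^{-t}$ small, which forces $k=q$ to grow with $\alpha$ and $\eps$ (but still as a constant depending only on those parameters). Once the embedding above is fixed, everything else is a direct per-coordinate computation together with the $\ell_{\infty}$-triangle inequality; in particular, the same argument gives both Theorems~\ref{thm:kmeanellinfhighdim} and~\ref{thm:kmedellinfhighdim} in a unified way.
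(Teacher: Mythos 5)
Your proposal is correct and follows the paper's high-level strategy exactly: reduce from the Khot--Saket almost-$q$-colorability versus no-large-independent-set gap (Theorem~\ref{thm:KS12}), embed vertices as points so that adjacent vertices are far ($4$) and non-adjacent vertices are close ($2$), and let each independent set become a radius-$1$ cluster. The concrete realization differs in two places. First, your embedding is vertex-indexed ($m=n$, entries $0/2/4$) whereas the paper's is edge-indexed ($m=|E|$, entries $\pm 2/0$, with centers $\pm 1/0$); both give the same distance profile $\{1,2,3,4\}$ and both are $\poly(n)$-dimensional, so this is cosmetic. Second, and more substantively, your soundness accounting is different and arguably cleaner: you observe that the set $S_j$ of points at distance $<2$ from a fixed center is an independent set (by the triangle inequality applied to one adjacent pair), bound $\sum_j|S_j|$ directly via the soundness of Theorem~\ref{thm:KS12}, and charge cost $\ge 2$ (resp.\ squared cost $\ge 4$) to every remaining point. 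The paper instead greedily extracts $(|V_i|-\eps'|V|)/2$ vertex-disjoint edges from each cluster (Claim~\ref{cl:so}) and charges cost $4$ per edge to the pair of endpoints, using $a^2+b^2\ge\tfrac12(a+b)^2$ for the \kmean case; your per-point version avoids both the matching extraction and that inequality. The only nit is parametric: with $q=k$ and $t=\lfloor\log_2 k\rfloor$ the Khot--Saket constraint $q\ge 2^t+1$ can fail by one (e.g.\ $k=4$, $t=2$), so you should instead fix $t$ first and set $q=2^t+1$ with $t$ large enough that $\alpha q^{-t}<\eps/4$; your surrounding sentence already asserts this is possible, so nothing of substance is affected.
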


\begin{proof}[Proof of Theorems~\ref{thm:kmeanellinfhighdim} and \ref{thm:kmedellinfhighdim}]
Fix $\varepsilon>0$ as in the theorem statement. Let $r=\alpha k$ and $\eps':=\eps/r$. Starting from the hard instance $(G(V,E),q,t,\eps')$ given in Theorem~\ref{thm:KS12},
  we create an instance of the \kmean, or of the \kmed problem, where $k=q$ (and $t=o(\log k)$),  as follows.

\paragraph{Construction.}  The \kmed or \kmean instance consists of the set of  points to be clustered $\Po\subseteq \R^{m}$ of size $n$ (where $n=|V|,\ m=|E|$) which will be defined below. 
 First, we arbitrarily orient the edges of $G$ (so that for every $(u,v)\in V\times V$, at most one of $(u,v)$ or $(v,u)$ is in $E$). Then, we will construct function $A:V\to\mathbb{R}^{m}$.  Given $A$, the point-set $\Po$ is just defined to be $$\Po:=\left\{ A(v)\big| v\in V\right\}.$$

For every $v\in V$ and every $(u',v')\in E$, we define the $(u',v')^{\text{th}}$ coordinate of $A(v)$ as follows
$$
A(v)_{(u',v')}:=\begin{cases}
2\text{\ \ \ \ \ \  if }v=u'\\
-2\text{\ \ \ if }v=v'\\
0\text{\ \ \ \ \ \  otherwise}
\end{cases}.
$$

  We now analyze the \kmean and \kmed cost of the
  instance.
  Consider the completeness case first.

  \paragraph*{Completeness.}
Suppose there are $k$ disjoint independent sets $V_1,\ldots ,V_k \subseteq V$, such that for all $i\in[k]$ we have $|V_i| = \frac{(1-\varepsilon')}{k}\cdot |V|$. Then, we partition $\Po$ into $k$ clusters,  say $C_1,\ldots ,C_k$, as follows. For every $p\in\Po $ where $p:=A(v)$ for some $v\in V$, if there is some $i\in[k]$ such that $v\in V_i$ then we assign $p$ to cluster $C_i$; otherwise, we assign it to cluster $C_1$. Next, we define the cluster centers $\C=\{c_1,\ldots ,c_k\}\subseteq \R^m$ as follows. For every $i\in[k]$, and every $(u',v')\in E$, the $(u',v')^{\text{th}}$ coordinate of $c_i$ is defined as follows
$$
c_i(u',v'):=\begin{cases}
1\text{\ \ \ \ \ \  if }u'\in V_i\\
-1\text{\ \ \ if }v'\in V_i\\
0\text{\ \ \ \ \ \  otherwise}
\end{cases}.
$$

Note that the definition of the $(u',v')^{\text{th}}$ coordinate of $c_i$ is consistent, as $V_i$ is an independent set and thus both $u'$ and $v'$ cannot be in $V_i$. For any $p\in \Po$ and any $c\in\C$, we have the following upper bound on their distance:
\begin{align}\label{equb}
\|p-c\|_\infty\le 3.
\end{align}

On the other hand for every $i\in[k]$, and every $v\in V_i$, we have the following computation on distance of $A(v)$ to its center.  
\begin{align}\label{eqb}
\|A(v)-c_i\|_\infty= \max\left\{\underset{(u,v)\in E}{\max}|A(v)_{(u,v)}+1|,\underset{(v,u)\in E}{\max}|A(v)_{(v,u)}-1| ,\underset{\substack{e\in E\\ v\notin e}}{\max}|c_i(e)|\right\}=1.
\end{align}

Therefore,  from \eqref{eqb}, the \kmean and \kmed cost of cluster $C_i$ for all $i\in [k]\setminus \{1\}$ is exactly $|V_i|$. On the other hand, putting together \eqref{equb} and \eqref{eqb}, the \kmean cost of $C_1$ is upper bounded by:
$$
|V_1|+9\cdot \left(|V|-\sum_{i\in[k]}|V_i|\right)\le |V_1|+9\varepsilon'|V|.
$$
Similarly, we have that the \kmed cost of $C_1$ is upper bounded by $
|V_1|+3\varepsilon'|V|$.

Thus,  the \kmean cost of the overall instance is at most $ |V|(1+8\varepsilon')$, while the \kmed cost is
  $ |V|(1+2\varepsilon')$. 
   Finally, we turn to the soundness analysis. 
  
    \paragraph*{Soundness.} We have that from the soundness case assumption that every subset $S\subset V$ of size at least $\eps'|V|$ is not an independent set in $G$.  Consider any set of centers $\C' = \{c_1,\ldots,c_r\}\subset \R^m$
  that is optimal for the \kmed or \kmean objective (and let $C_1,\ldots ,C_r$ be the corresponding partitioning of $\Po$ into $r$ clusters). 
  We have the following claim.
  \begin{claim}\label{cl:so}
  Let $i\in[r]$ and $V_i:=\{v\in V\mid A(v)\in C_i\}$. Then, there are $\nicefrac{\left(|V_i|-\eps'|V|\right)}{2}$ vertex disjoint edges in the induced subgraph of $V_i$ in $G$.
  \end{claim}
  \begin{proof}
  Suppose $|V_i|\ge \eps'|V|$ then there exists an edge in the induced subgraph of $V_i$ in $G$. Remove the two corresponding vertices of the edge  from $V_i$. Repeat the above procedure until $|V_i|<\eps'|V|$. The vertex pairs (which are edges in $G$) that were removed  would be at least $\nicefrac{\left(|V_i|-\eps'|V|\right)}{2}$ in number.
  \end{proof}
  
For every $i\in[r]$, let $E_i$ be the set of vertex disjoint edges guaranteed by the above claim.   Fix $i\in[r]$. For every $e:=(u',v')\in E_i$ we have:
\begin{align}\label{eqsoundmed}
\|A(u')-c_i\|_\infty+\|A(v')-c_i\|_\infty\ge \|A(v')-A(u')\|_\infty\ge|A(u')_e-A(v')_e|\ge 4.
\end{align}
  We also have:
\begin{align}
\|A(u')-c_i\|_\infty^2+\|A(v')-c_i\|_\infty^2&\ge (A(u')_e-c_i(e))^2+(A(v')_e-c_i(e))^2\nonumber\\
&\ge\frac{1}{2}\cdot (A(u')_e-A(v')_e)^2\ge 8\label{eqsoundmean}.
\end{align}  

  Therefore, the optimal solution w.r.t. \kmed objective has cost at least:
  \begin{align*}\allowdisplaybreaks
  &\sum_{i\in[r]}\sum_{v\in V_i}\|A(v)-c_i\|_\infty&\\
  &\ge \sum_{i\in[r]}\sum_{(u',v')\in E_i}(\|A(u')-c_i\|_\infty+\|A(v')-c_i\|_\infty)&\\
  &\ge \sum_{i\in[r]}(4\cdot |E_i|)&(\text{ from }\eqref{eqsoundmed})\\
&\ge   \sum_{i\in[r]}\left(2\left(|V_i|-\eps'|V|\right)\right)&(\text{ from Claim~\ref{cl:so}})\\
&\ge (2-\eps'r)\cdot |V|=(2-\eps)\cdot |V|&
  \end{align*}
  
  Similarly, the optimal solution w.r.t. \kmean objective has cost at least:\allowdisplaybreaks
  \begin{align*}\allowdisplaybreaks
  &\sum_{i\in[r]}\sum_{v\in V_i}\|A(v)-c_i\|_\infty^2&\\
  &\ge \sum_{i\in[r]}\sum_{(u',v')\in E_i}(\|A(u')-c_i\|_\infty^2+\|A(v')-c_i\|_\infty^2)&\\
  &\ge \sum_{i\in[r]}(8\cdot |E_i|)&(\text{ from }\eqref{eqsoundmean})\\
&\ge   \sum_{i\in[r]}\left(4\left(|V_i|-\eps'|V|\right)\right)&(\text{ from Claim~\ref{cl:so}})\\
&\ge (4-\eps'r)\cdot |V|=(4-\eps)\cdot |V|&
  \end{align*}
 \end{proof}

To prove that  Theorems~\ref{thm:kmeanellinfhighdim}~and~\ref{thm:kmedellinfhighdim}  hold even when in the completeness case we have $k=4$, we simply start from the below theorem instead of Theorem~\ref{thm:KS12}. 

\begin{theorem}[\cite{KMS17,DKKMS18a,DKKMS18b,BKS19,KMS18}]
For any constant $\varepsilon>0$, given a graph $G(V, E)$, it is \NP-hard to distinguish between the following two cases:
\begin{itemize}
\item \textbf{\emph{Completeness}}:  There are $4$ disjoint independent sets $V_1,V_2,V_3,V_4 \subseteq V$, such that $|V_1| = |V_2|=|V_3| = |V_4|= \frac{(1-\varepsilon)}{4}\cdot |V|$.
\item \textbf{\emph{Soundness}}: There is no independent set in $G$ of size $\eps\cdot |V|$.
\end{itemize}
\end{theorem}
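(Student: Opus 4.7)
The plan is to derive this theorem as a consequence of the 2-to-2 Games Theorem with imperfect completeness, proved across the cited sequence of works by Khot, Minzer, Safra, Dinur, Kindler, Barak, Kothari, and Steurer. The structure of the proof is a long-code style FGLSS reduction analogous to the Dinur-Mossel-Regev chromatic-number hardness, but starting from 2-to-2 Label Cover instead of Unique Games, which is what allows us to get $4$ disjoint independent sets in the completeness case (rather than only a bound on the independence number).

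Concretely, first invoke the 2-to-2 Games Theorem to obtain a Label Cover instance $\mathcal{L} = ((A \cup B, E), [R], \{\pi_e\}_{e \in E})$, with each $\pi_e$ a 2-to-2 relation, such that it is \NP-hard to distinguish the $(1 - \eta)$-satisfiable case from the $\eta$-satisfiable case for $\eta \ll \eps$. For every vertex $v \in A \cup B$ create a cloud of (appropriately folded) functions $f : [R] \to [4]$, and add edges inside a cloud and between two clouds sharing an edge $e = (a,b) \in E$ to enforce a $4$-ary dictatorship/consistency test compatible with $\pi_e$. This is the direct analog of the Dinur-Mossel-Regev test, but with the Grassmann-style gadget inherited from the 2-to-2 games proof.

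In the completeness case, take a labeling $\sigma$ satisfying $(1 - \eta)$ of the constraints of $\mathcal{L}$, and partition each cloud at vertex $v$ into four parts according to the value $f(\sigma(v)) \in [4]$. The four parts, glued together across all clouds, give four disjoint independent sets, each of relative size $(1-\eps)/4$; the folding and the dictatorship test guarantee independence of each part. In the soundness case, any independent set of density at least $\eps$ would carry non-trivial Fourier weight on low-degree structured parts of the $4$-ary cube, which by the Grassmann testing theorem (and its refinements) list-decodes into a labeling of $\mathcal{L}$ satisfying more than $\eta$ fraction of constraints, contradicting soundness.

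The main obstacle, as with all known proofs of statements of this flavor, is the soundness analysis, which relies on the expansion / covering properties of the Grassmann graph that constitute the technical core of the 2-to-2 Games Theorem. The completeness side and the dictatorship-test gadget are comparatively routine long-code machinery once the combinatorial 2-to-2 hardness is in hand.
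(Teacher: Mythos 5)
The paper does not prove this statement at all: it is imported verbatim, as a black box, from the cited sequence of works on the 2-to-2 Games Theorem, and is then used only to set $k=4$ in Theorems~\ref{thm:kmeanellinfhighdim} and~\ref{thm:kmedellinfhighdim}. So there is no internal proof to compare your sketch against, and the relevant question is whether your proposal actually constitutes a proof. It does not. What you have written is a fair high-level description of how the cited works proceed (a long-code/FGLSS-style reduction in the spirit of Dinur--Mossel--Regev, with completeness giving an almost-$4$-coloring and soundness resting on the structural theorem for the Grassmann graph), but every step that carries mathematical content is deferred: the soundness analysis you describe in one sentence \emph{is} the theorem, namely the expansion/pseudorandomness theorem for zoom-ins and zoom-outs of the Grassmann graph established across KMS17, DKKMS18a--b, BKS19, and KMS18. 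A proof that defers its entire technical core to the papers being cited adds nothing over the citation itself.

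Two further cautions on the parts you do spell out. First, the derivation is not literally ``2-to-2 Games Theorem as a black box, then a DMR outer reduction'': in the cited works the independent-set/almost-coloring hardness is obtained directly from the Grassmann structural theorem, in parallel with the 2-to-2 theorem, because the soundness of the independent-set reduction needs more than the value gap of a generic 2-to-2 instance (one needs the specific structure, or at least additional smoothness/expansion properties, of the instances produced). Second, your soundness sentence conflates the Fourier-analytic/invariance-principle argument over the $4$-ary cube (which is the DMR route, conditional on 2-to-1-type conjectures) with the Grassmann zoom-in list-decoding argument (which is what the unconditional result actually uses); only the latter is available here, since the reduction's gadget lives on the Grassmann graph rather than the cube. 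Neither point is fatal to the plan, but both would have to be resolved in any genuine write-up. If your goal is merely to use the theorem, do what the paper does and cite it; if your goal is to prove it, essentially all of the work remains.
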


We remark that Theorem~\ref{thm:introellinf} can also be obtained for $\ell_p$-metrics as $p$ tends to $\infty$. An interesting variant of Theorems~\ref{thm:kmeanellinfhighdim}~and~\ref{thm:kmedellinfhighdim}, is when we restrict that the centers have to be picked from $\Z^d$ (where $d=\poly(n)$) instead of allowing to pick them from anywhere in $\R^d$. This can be seen as in between the traditional discrete and continuous case, where the size of the set of candidate centers is exponential in the number of points to be clustered, but has a compact representation (in this case fixed representation depending only on $n$). Surprisingly, for this variant, we show even stronger inapproximability factors of $9-\eps$ for \kmean and $3-\eps$ for \kmed (see Theorems~\ref{thm:kmeanellinflattice}~and~\ref{thm:kmedellinflattice} in Appendix~\ref{sec:lattice}), for any small $\eps>0$. We prove below a strengthening of Theorems~\ref{thm:kmeanellinfhighdim}~and~\ref{thm:kmedellinfhighdim} under the unique games conjecture.

\begin{theorem}[Bi-criteria 2-$\mathsf{mean}$ and 2-$\mathsf{median}$ without candidate centers in $n^{O(1)}$ dimensional $\ell_\infty$-metric space]\label{thm:2meanmedellinfhighdim} Assuming the unique games conjecture, for any constant $\varepsilon>0$, and every constant $r\in\mathbb{N}$,
given a point-set $\Po\subset \R^m$ of size $n$ (and $m=\poly(n)$), it is \NP-hard to distinguish between the following two cases:
\begin{itemize}
\item \textbf{\emph{Completeness}}:  
There exists $\C':=\{c_1,c_2\}\subseteq \mathbb R^m$ and $\sigma:\Po\to\C'$ such that $$\sum_{a\in\Po}\left(\|a-\sigma(a)\|_\infty\right)^2\le n\ \ \  \left(\text{resp.\ }\sum_{a\in\Po}\|a-\sigma(a)\|_\infty\le n\right),$$
\item \textbf{\emph{Soundness}}: For every $\C':=\{c_1,\ldots ,c_r\}\subseteq \mathbb R^m$ and every $\sigma:\Po\to\C'$ we have: $$\sum_{a\in\Po}\left(\|a-\sigma(a)\|_\infty\right)^2\ge (4-\varepsilon)\cdot n\ \ \ \left(\text{resp.\ }\sum_{a\in\Po}\|a-\sigma(a)\|_\infty\ge (2-\varepsilon)\cdot n\right).$$
\end{itemize}
\end{theorem}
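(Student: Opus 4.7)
The plan is to adapt verbatim the reductions of Theorems~\ref{thm:kmeanellinfhighdim} and \ref{thm:kmedellinfhighdim} from the Khot--Saket coloring hardness (which becomes vacuous at $q = 2$, since $2$-colorability lies in \P) to a UGC-based starting problem that furnishes the analogous promise at $q = 2$. Concretely, I would invoke a Bansal--Khot-style hardness (of the sort underlying their optimal one-free-bit Long Code test for Vertex Cover): assuming the Unique Games Conjecture, for every $\gamma, \delta > 0$ it is \NP-hard, given a graph $G = (V,E)$, to distinguish between (Yes) $V$ admits two disjoint independent sets $V_1, V_2$ each of size at least $(1/2 - \gamma)|V|$, and (No) every independent set of $G$ has size at most $\delta |V|$. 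Setting $\delta := \eps / (2r)$ (with $\gamma$ even smaller) will yield enough soundness slack to accommodate the $r$ centers permitted in the soundness regime.

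Given such a hard instance, I would apply the same embedding as in Theorems~\ref{thm:kmeanellinfhighdim} and \ref{thm:kmedellinfhighdim}: orient the edges arbitrarily, set $A(v)_{(u',v')} = +2$ if $v = u'$, $-2$ if $v = v'$, and $0$ otherwise, and let $\Po := \{A(v) : v \in V\}$. For completeness, take centers $c_1, c_2 \in \R^{|E|}$ with $c_i(u',v') = +1$ if $u' \in V_i$, $-1$ if $v' \in V_i$, and $0$ otherwise; this is well-defined by the independence of each $V_i$. Every $v \in V_1 \cup V_2$ contributes exactly $\|A(v) - c_i\|_\infty = 1$, while each of the at most $2\gamma |V|$ vertices not covered by $V_1 \cup V_2$ contributes at most $9$ (for \kmean) or $3$ (for \kmed). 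Taking $\gamma$ sufficiently small and, if desired, rescaling the embedding coordinates by a factor $(1 + O(\gamma))^{-1/2}$, the total cost becomes at most $n$, as required.

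For soundness, I would apply Claim~\ref{cl:so} with threshold $\delta = \eps / (2r)$ to each of the $r$ clusters $V_1,\dots,V_r$ produced by the optimal solution. Any cluster $V_i$ with $|V_i| \geq \delta |V|$ then contains at least $(|V_i| - \delta |V|)/2$ vertex-disjoint edges of $G$, and for every such edge $(u', v')$ the coordinate $(u',v')$ gives $|A(u')_{(u',v')} - A(v')_{(u',v')}| = 4$. By the triangle inequality and the convexity inequality $x^2 + y^2 \geq (x-y)^2 / 2$ used in \eqref{eqsoundmed}--\eqref{eqsoundmean}, this forces $\|A(u') - c_i\|_\infty + \|A(v') - c_i\|_\infty \geq 4$ and $\|A(u') - c_i\|_\infty^2 + \|A(v') - c_i\|_\infty^2 \geq 8$. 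Summing across the $r$ clusters and absorbing the at most $r \delta |V| = \eps |V| / 2$ exceptional vertices gives $\sum_v \|A(v) - \sigma(A(v))\|_\infty \geq (2 - \eps) |V|$ and the squared analogue $(4 - \eps) |V|$, exactly as claimed.

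The main obstacle is securing the right UGC-based starting hardness: a promise problem with \emph{two} disjoint independent sets of size $\approx |V|/2$ in completeness and \emph{arbitrarily small} independent-set size in soundness. This is precisely what the Bansal--Khot reduction (and its refinements behind the $2 - \eps$ hardness of Vertex Cover under UGC) is engineered to give, whereas the unconditional Khot--Saket bound requires $q \geq 2^t + 1$ and hence cannot reach $k = 2$. Everything else is a transcription of the proofs of Theorems~\ref{thm:kmeanellinfhighdim} and \ref{thm:kmedellinfhighdim} at $k = 2$, together with the minor bookkeeping of choosing $\gamma$ and $\delta$ small enough relative to $\eps$ and $r$ to attain the target ratios $(2 - \eps)$ and $(4 - \eps)$ exactly.
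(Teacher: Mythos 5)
Your proposal matches the paper's proof exactly: the paper likewise obtains Theorem~\ref{thm:2meanmedellinfhighdim} by rerunning the reduction of Theorems~\ref{thm:kmeanellinfhighdim} and~\ref{thm:kmedellinfhighdim} verbatim, with the Khot--Saket starting point replaced by the UGC-based Bansal--Khot hardness of distinguishing two disjoint independent sets of size $\approx |V|/2$ from the case of no independent set of size $\eps|V|$. Your additional remarks on choosing $\gamma,\delta$ and rescaling are just the bookkeeping the paper leaves implicit.
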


The  proof simply follows by using the  following result of Bansal and Khot instead of Theorem~\ref{thm:KS12}.

\begin{theorem}[Bansal and Khot \cite{BK09}]
Assuming the unique games conjecture, for any constant $\varepsilon>0$, given a graph $G(V, E)$, it is \NP-hard to distinguish between the following two cases:
\begin{itemize}
\item \textbf{\emph{Completeness}}:  There are $2$ disjoint independent sets $V_1,V_2 \subseteq V$, such that $|V_1| = |V_2|= \frac{(1-\varepsilon)}{2}\cdot |V|$.
\item \textbf{\emph{Soundness}}: There is no independent set in $G$ of size $\eps\cdot |V|$.
\end{itemize}
\end{theorem}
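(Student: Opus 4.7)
The plan is to prove this via a long-code / dictatorship-test reduction from Unique Games, which is the standard template Bansal and Khot introduced for vertex cover / independent set hardness. Start from a \UG instance $\Phi$ with label set $[R]$ and permutation constraints $\{\pi_{uv}\}$, whose value is $\geq 1-\eta$ in the YES case and $\leq \delta$ in the NO case for parameters $\eta,\delta$ that we can choose arbitrarily small. Construct a graph $H$ whose vertex set is the disjoint union over UG variables $u$ of a ``cloud'' $\{u\}\times\{-1,1\}^R$. For each UG constraint $\pi_{uv}$ and each pair $(x,y)\in\{-1,1\}^R\times\{-1,1\}^R$ drawn from a carefully chosen distribution $\mathcal{D}_{\pi_{uv}}$, put an edge between $(u,x)$ and $(v,y)$. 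The distribution $\mathcal{D}_\pi$ is designed to be a negatively correlated pair test: sample $x$ uniformly, set $\tilde y_{\pi(i)}=-x_i$, and then flip each coordinate of $\tilde y$ independently with tiny probability $\rho$ to obtain $y$.

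\textbf{Completeness.} Given a UG labeling $A$ that satisfies a $(1-\eta)$ fraction of constraints, define the partition $V_b=\{(u,x):x_{A(u)}=(-1)^b\}$ for $b\in\{0,1\}$. Whenever the constraint $\pi_{uv}$ is satisfied, i.e.\ $\pi_{uv}(A(v))=A(u)$, a pair $(x,y)\sim\mathcal{D}_{\pi_{uv}}$ satisfies $y_{A(v)}=-x_{A(u)}$ except with probability $\rho$, so the corresponding edges go between $V_0$ and $V_1$. A standard clean-up argument (discarding the $O(\eta+\rho)$-fraction of vertices incident to any within-class edge) produces two genuine independent sets of size $\tfrac{1-\eps}{2}\cdot|V(H)|$ by choosing $\eta,\rho$ much smaller than $\eps$.

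\textbf{Soundness.} Suppose $H$ has an independent set $S$ with $|S|\geq\eps|V(H)|$. Define $f_u:\{-1,1\}^R\to\{0,1\}$ as the indicator of the slice of $S$ in the cloud of $u$; independence forces $\mathbb{E}_{(x,y)\sim\mathcal{D}_{\pi_{uv}}}[f_u(x)f_v(y)]=0$ on every UG edge, while averaging gives $\mathbb{E}_u\mathbb{E}_x[f_u(x)]\geq\eps$. The heart of the argument is an invariance / noise-stability step: if all $f_u$ had every low-degree influence $\mathrm{Inf}_j^{\leq d}(f_u)\leq\tau$, then passing to Gaussian space (Mossel--O'Donnell--Oleszkiewicz invariance) and applying the ``It Ain't Over Till It's Over'' theorem to the negatively correlated pair distribution $\mathcal{D}_\pi$ would force $\mathbb{E}[f_u(x)f_v(y)]>0$, contradicting independence. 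Hence for most UG variables $u$, some coordinate $j(u)$ has $\mathrm{Inf}_{j(u)}^{\leq d}(f_u)\geq\tau$. Labeling each $u$ by a uniformly random element of its $\tau$-influential set (of size $O(1/\tau)$) yields a randomized assignment to $\Phi$ that satisfies a constant (depending only on $\eps,\tau,d$) fraction of constraints. Taking $\delta$ smaller than this constant contradicts the \UG soundness.

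\textbf{Main obstacle.} The delicate step is tuning the distribution $\mathcal{D}_\pi$: we need $\mathcal{D}_\pi$ to be close enough to the deterministic $-1$-correlated pair to preserve the $(1-\eps)/2$ bipartite structure in completeness, yet noisy enough for the invariance principle and the MOO/Borell-type stability bound to produce a quantitatively meaningful contradiction in soundness. Getting both sides simultaneously is what forces the specific ``opposite pair with small smoothing'' construction, and is the technical core of the Bansal--Khot optimal long-code test.
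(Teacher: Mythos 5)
First, a point of reference: the paper does not prove this statement at all --- it is imported verbatim from Bansal and Khot \cite{BK09} and used as a black box --- so your sketch has to be measured against their proof. Your overall architecture (a long-code test composed with Unique Games, an anti-correlated pair distribution, a dictatorship partition for completeness, and invariance plus the ``It Ain't Over Till It's Over'' theorem with influence decoding for soundness) is indeed the Bansal--Khot template, and your soundness outline is essentially right, modulo the standard Khot--Regev caveat that the independent set may concentrate on a small fraction of the clouds: ``for most UG variables $u$'' should be ``for a non-negligible fraction of $u$,'' and one then needs the vertex-expansion variant of the conjecture (or the Khot--Regev sub-instance argument) to conclude that the induced sub-instance still carries enough constraints to decode.

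The genuine gap is in the completeness. The theorem demands two \emph{exactly} independent disjoint sets of measure $\frac{1-\eps}{2}$ each, and the present paper really uses exact independence (the centers $c_i$ in Theorem~\ref{thm:2meanmedellinfhighdim} are only well-defined because each $V_i$ is independent). With your $\mathcal{D}_\pi$ --- flip each coordinate of the anti-correlated string independently with probability $\rho$ --- \emph{every} pair $(x,y)$ has positive probability, so the support, and hence the edge set of $H$, is the complete bipartite graph between any two constrained clouds; $H$ then has no independent set larger than a single cloud and the completeness fails outright. Even if you truncate the noise to at most $2\rho R$ flips per edge, the dictator partition $V_0, V_1$ is only a $2$-coloring with an $O(\eta+\rho)$ fraction of monochromatic \emph{edges}; because the graph is dense, essentially every vertex is incident to some monochromatic edge, so ``discarding the vertices incident to any within-class edge'' discards everything, and no deletion of an $\eps$-fraction of vertices can hit a monochromatic edge set that may contain a near-perfect matching. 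Converting a ``few monochromatic edges'' guarantee into two bona fide independent sets is precisely the content of Bansal--Khot's one-free-bit test: the acceptance pattern and the noise are engineered so that, for a dictator labeling, within-class pairs are excluded from the support itself (up to an $\eps$-fraction of vertices identified by the construction, not by a post-hoc cleanup). Your sketch correctly flags the completeness/soundness tension as the technical core but does not supply this mechanism, and without it the completeness as written does not hold.
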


\subsection{Approximability}
We now show that the above bound is tight for a large range of settings.
First, for any $k$, there is an algorithm running in time $dn^{k+2}$
that takes as input a set of points in $\R^d$ and output a 2-approximate solution to the
continuous \kmed problem (and a 4-approximation solution for the continuous \kmean problem)
in the $\ell_{\infty}$-metric (see Fact~\ref{fact:2approx}). 
Second, we show how to obtain a $(1+\eps)$-approximation solution in time
$(kd\eps^{-1} \log n)^{O(k)}(1/\eps)^{O(dk)} + \text{poly}(nd/\eps)$ (see Corollary~\ref{cor:below2}).
Third, we show a $(2+\eps)$-approximation solution in time $O((\eps^{-1}kd\log n)^{O(k)} + (nd)^{O(1)})$
which is fixed parameter tractable when parameterized by $k$, for any $d = 2^{O(\log^{1-\delta} (n))}$, where $\delta$ is a constant less than 1
(see Corollary~\ref{cor:fptk}).
%% \vnote{Maybe we can do:
Finally, we provide an $(1+2/e+\eps)$-approximate solution in time
$(kd\eps^{-1} \log n)^{O(k)} +(kd\eps^{-1} \log n)^{O(1)}(1/\eps)^{O(d)} + \text{poly}(nd/\eps)$
which shows that for the hardness bounds mentioned above, the dependency in $d$
cannot be significantly improved unless $k$ becomes large (see Corollary~\ref{cor:1+2/e}).

\begin{fact}
  \label{fact:2approx}
  There exists a 2-approximation algorithm (resp. 4-approximation algorithm) that
  for any   instance of the continuous \kmed (resp. \kmean) problem  consisting of
  $n$ points $\Po$ in $\R^d$ in the $\ell_{\infty}$-metric runs in time $dn^{k+2}$.
\end{fact}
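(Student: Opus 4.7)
The plan is to reduce the continuous problem to the discrete one (where centers must be chosen from the input) at a bounded multiplicative loss, and then solve the discrete problem exactly by brute-force enumeration of all $k$-subsets of $\Po$.

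First I would argue the reduction step. Let $C^{*} = \{c_{1}^{*}, \dots, c_{k}^{*}\}$ be an optimal continuous center set, inducing a partition $\Po_{1} \dot\cup \cdots \dot\cup \Po_{k}$. For each $i$, let $p_{i} \in \Po_{i}$ be any input point minimising $\|p_{i} - c_{i}^{*}\|_{\infty}$ (so in particular $\|p_{i} - c_{i}^{*}\|_{\infty} \le \|q - c_{i}^{*}\|_{\infty}$ for every $q \in \Po_{i}$). By the triangle inequality, for any $q \in \Po_{i}$,
\[
\|q - p_{i}\|_{\infty} \;\le\; \|q - c_{i}^{*}\|_{\infty} + \|c_{i}^{*} - p_{i}\|_{\infty} \;\le\; 2\,\|q - c_{i}^{*}\|_{\infty}.
\]
Summing over all $q \in \Po$ shows that the \kmed cost of $\Po$ against the input-point center set $\{p_{1}, \dots, p_{k}\}$ (with each $q$ assigned to its old cluster, which is no worse than the nearest-center assignment) is at most twice the continuous optimum. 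Squaring the inequality yields a factor $4$ for \kmean.

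Next I would run the obvious enumeration algorithm on the discrete instance induced by taking $\Po$ itself as the candidate center set. The algorithm lists every $k$-subset $\C \subseteq \Po$, computes the cost $\sum_{a \in \Po} \min_{c \in \C} \|a - c\|_{\infty}$ (respectively its square) by scanning each of the $n$ points against each of the $k$ centers in $d$ coordinates, and returns the best $\C$. Correctness is immediate: this outputs an exact optimum for the discrete variant, which by the previous paragraph is a $2$-approximation (resp.\ $4$-approximation) to the continuous optimum.

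Finally I would bound the running time. There are $\binom{n}{k} \le n^{k}$ subsets, each evaluated in $O(n k d)$ time, giving $O(n^{k+1} k d) = O(d\,n^{k+2})$ as claimed. There is no real obstacle here; the only point to double-check is that the triangle-inequality step carries over correctly to the squared metric for \kmean (it does, via $(x+y)^{2} \le (2\max(x,y))^{2} \le 4\max(x,y)^{2}$ applied with $x = \|q - c_{i}^{*}\|_{\infty}$, $y = \|c_{i}^{*} - p_{i}\|_{\infty} \le x$), so the loss is exactly $4$ rather than something larger.
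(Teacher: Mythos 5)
Your proposal is correct and follows essentially the same route as the paper: replace each optimal continuous center by the nearest input point (losing a factor $2$ for \kmed, hence $4$ for \kmean by the triangle inequality), then brute-force over all $k$-subsets of $\Po$ in time $O(d\,n^{k+2})$. The only difference is that you spell out the squaring step for \kmean, which the paper leaves as "an analogous argument."
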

\begin{proof}
  Consider an instance of the continuous \kmed problem consisting of a set of $n$ points
  in $\R^d$ (an analogous argument applies to the \kmean problem).
  Consider the solution $\tilde{S}$ obtained from the optimal solution as follows: for each center
  $c_i$ of the optimal solution, pick the point $p_{c_i}$ of $\Po$ that is the closest to $c_i$. $\tilde{S}$
  obviously contains at most $k$ centers and so is a valid solution.
  Now, each point $p \in \Po$ whose closest center in the optimal solution
  is $c_i$ has a center that is no further away than $p_{c_i}$. Since by the choice of $p_{c_i}$ we have
  that $||p-c_i||_\infty \ge ||p_{c_i}-c_i||_{\infty}$, and we have by the triangle inequality
  $||p-p_{c_i}||_{\infty} \le 2||p-c_i||_\infty$ and so $\tilde{S}$ is at most a 2-approximation.

  Thus, the algorithm that enumerates all possible $k$-tuples of $\Po$ and outputs the one that induces
  the minimum \kmed cost achieves a $2$-approximation in the above time bound.
\end{proof}

We then turn to the following fact which states that up to losing a $(1+\eps)$-factor in the approximation
guarantee, one can identify a discrete set of centers of size at most $n(1/\eps)^{O(d)}\log n$.
Given an instance $\Po$ of the continuous \kmed (resp. \kmean problem),
we define an $\eps$-approximate candidate center set for $\Po$ as a set $\C$ such that 
there exists a set of $k$ points of $\C$ whose \kmed (resp. \kmean) cost
is at most $(1+\eps)$ times the cost of the optimal continuous \kmed (resp. \kmean) clustering.
\begin{lemma}
  \label{lem:simple}
  There exists an algorithm that takes as input an instance $\Po$ of the continuous \kmed
  (resp. continuous \kmean) in $\R^d$ and that produces an $\eps$-approximate candidate center set $\C$ of size
  $|\Po|(1/\eps)^{O(d)}\log |\Po|$.
\end{lemma}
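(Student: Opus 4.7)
The plan is the classical Matou\v{s}ek--Har-Peled style multi-scale grid construction around each input point. First I would compute any crude polynomial-factor upper estimate $U$ on $\OPT$ (for example via Fact~\ref{fact:2approx}, since the lemma places no constraint on the running time of constructing $\C$), so that $\OPT \leq U \leq \poly(|\Po|) \cdot \OPT$. Then, for each $p \in \Po$ and each of the $O(\log |\Po|)$ geometric scales $2^i$ with $\eps U/|\Po| \lesssim 2^i \lesssim U$, I place a uniform axis-aligned grid $G_{p,i}$ of spacing $\eps \cdot 2^i$ filling the $\ell_\infty$-ball of radius $2^i$ around $p$; I also include $\Po$ itself in $\C$. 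Each local grid $G_{p,i}$ has $(1/\eps)^{O(d)}$ points, which gives $|\C| = |\Po|(1/\eps)^{O(d)} \log |\Po|$ as required.

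To verify that this $\C$ is an $\eps$-approximate candidate set, I would fix an optimal continuous solution $c_1^*,\dots,c_k^*$ with induced clusters $C_1^*,\dots,C_k^*$ and round each $c_j^*$ individually. Let $p_j := \arg\min_{p \in C_j^*} \|p - c_j^*\|_\infty$ and $r_j := \|p_j - c_j^*\|_\infty$. If $r_j \leq \eps U/|\Po|$ I would snap $c_j^*$ to $\hat c_j := p_j \in \Po \subseteq \C$ (this also handles $r_j = 0$, where $c_j^*$ is already an input point); otherwise I would snap it to the grid point $\hat c_j \in G_{p_j,i}$ at the scale $i$ with $r_j \leq 2^i \leq 2r_j$, so that $\|\hat c_j - c_j^*\|_\infty \leq 2\eps r_j$. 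Using the triangle inequality together with the minimality bound $r_j \leq \|q - c_j^*\|_\infty$ for every $q \in C_j^*$, one gets $\|q - \hat c_j\|_\infty \leq (1+2\eps)\|q - c_j^*\|_\infty$ in the second case, and an additive slack of at most $\eps U/|\Po|$ per point in the first case. Summing over $q$ and $j$ gives \kmed cost at most $(1+2\eps)\OPT + \eps U = (1+O(\eps))\OPT$; for \kmean, squaring the per-point triangle inequality yields the analogous $(1+O(\eps))$ bound. Rescaling $\eps$ by a constant at the outset then delivers the promised $(1+\eps)$ ratio.

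The main obstacle I expect is calibrating the smallest scale correctly: too coarse, and we lose control over clusters whose optimal center nearly coincides with an input point; too fine, and we would need more than $O(\log |\Po|)$ scales to cover the raw aspect ratio of $\Po$, blowing up $|\C|$ beyond the claimed bound. The role of the preliminary polynomial-factor estimate $U$ of $\OPT$ is precisely to fix this trade-off: setting the smallest scale to $\Theta(\eps U/|\Po|)$ makes both error sources — the additive slack at the bottom scale and the multiplicative slack from grid rounding — come out simultaneously as $O(\eps)\OPT$, while keeping the number of scales logarithmic in $|\Po|$.
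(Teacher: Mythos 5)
Your construction is essentially identical to the paper's: both build $\C$ by taking $\Po$ together with, for each input point and each of $O(\log|\Po|)$ geometric scales between roughly $\eps\gamma/|\Po|$ and $\gamma$ (where $\gamma$ is a constant-factor estimate of $\OPT$), an $\eps\cdot 2^i$-net of the radius-$2^i$ ball (your axis-aligned grid is exactly such a net in $\ell_\infty$), and both verify correctness by snapping each optimal center toward its nearest input point and charging the $O(\eps)\cdot r_j$ movement against the per-point lower bound $r_j$. The one imprecision is that your final additive term $\eps U=O(\eps)\OPT$ requires $U=O(\OPT)$ rather than the merely polynomial-factor estimate you announce; since Fact~\ref{fact:2approx} (or, as the paper does, a polynomial-time $O(1)$-approximation for the discrete instance with candidate centers $\Po$) supplies a constant-factor $U$, this is a wording slip rather than a gap.
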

\begin{proof}
  The proof follows from designing approximate candidate center sets (see~\cite{Matousek00,Cohen-AddadL19}
  for similar results for the
  $\ell_2$-metric). Let $n = |\Po|$.
  The set of candidate centers $\C$ is iteratively constructed as follows. Let $\gamma$ be an estimate of
  the cost of the optimal solution (which can be computed in polynomial time using an $O(1)$-approximate
  solution on the discrete version of the problem where the set of candidate centers is $\Po$;
  Fact~\ref{fact:2approx} guarantees that it is an $O(1)$-approximate solution to the continuous version).
  First start with $\C = \Po$. Then, for each point $p \in S$,
  for each $2^i$ such that $\eps\gamma/n \le 2^i \le 2\gamma$, consider the ball of center $p$ and
  radius $2^i$ and pick an $\eps\cdot 2^i$-net in this ball, the size of the net is at most
  $(1/\eps)^{O(d)}$. Add the net to $\C$.
  
  The total size of the candidate center set $\C$ follows immediately from the definition.
  We thus turn to proving the correctness. Consider the optimal solution and let's build
  a solution $S \subseteq \C$ of cost at most $(1+\eps)$ times higher. For any center $c$ in the optimal
  solution, consider the closest point $p_c$ in $\Po$ and let $\delta$ be $||p_c - c||_{\infty}$.
  Let $\tilde{c}$ be the point of $\C$ that is the closest to $c$. By triangle inequality and the definition
  of the net, we have that $||\tilde{c}-c||_{\infty} < \eps \delta$.
  Therefore, applying the triangle inequality, each point in the cluster of $c$ can be assigned
  to $\tilde{c}$ at an additive cost increase of $\eps\delta$. Moreover, since each point of the cluster
  is at distance at least $\delta$ from $c$, the cost to assign each point in cluster $c$ to $\tilde{c}$ is
  no more than $(1+\eps)$ times higher than the cost of assigning these points to $c$ and so follows the lemma.
\end{proof}

For proving Corollaries~\ref{cor:below2},~\ref{cor:fptk},~and~\ref{cor:1+2/e}, we will make use of the notion of coreset.
A (strong) $\eps$-\emph{coreset} for a discrete \kmed instance of $n$ points $\Po$ and $m$ candidate centers $\C$ is a
set of points $W$ with a weight function $w: W \mapsto \R^+$ such that for any set of centers $S \subseteq \C$ of size $k$, we have:
\[\sum_{p \in P_0}\min_{s \in S} \text{dist}(p,s) = (1\pm\eps) \sum_{p \in W}w(p) \min_{s \in S} \text{dist}(p,s).\]

\begin{sloppypar}We now consider the following lemma from Langberg and Feldman~\cite{feldman2011unified} and Chen~\cite{chen2009coresets}.\end{sloppypar}

\begin{lemma}[\cite{feldman2011unified,chen2009coresets} -- Restated]
  \label{lem:coreset}
  There exists a polynomial-time algorithm that 
  on any instance of the discrete \kmed problem consisting of $n$ points and
  $m$ candidate centers, outputs an $\eps$-coreset of size $(k\eps^{-1}\log m)^{O(1)}$.
\end{lemma}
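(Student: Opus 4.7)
The plan is to prove this via the sensitivity sampling framework of Feldman and Langberg, specialized to the discrete \kmed setting where the candidate centers lie in a fixed set $\C$ of size $m$. First I would compute, in polynomial time, some $O(1)$-approximate solution $S_0 \subseteq \C$ (for instance the $5.3$-approximation of~\cite{BPRST15}), and use it to define for each data point $p \in \Po$ an \emph{upper sensitivity}
\[
\hat{s}(p) \;:=\; \alpha \cdot \frac{\text{dist}(p, S_0)}{\cost(\Po, S_0)} \;+\; \frac{\beta}{|P_{b(p)}|},
\]
where $b(p)$ is the point of $S_0$ nearest $p$, $P_{b(p)}$ is its cluster in the $S_0$-clustering, and $\alpha,\beta$ are absolute constants. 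A triangle-inequality calculation (this is the classical Feldman--Langberg bound) shows both $\hat{s}(p) \ge s(p) := \sup_{S \subseteq \C,\, |S|=k} \text{dist}(p,S)/\cost(\Po,S)$ and $\sum_{p \in \Po} \hat{s}(p) = O(k)$, since one can route each point to its $S_0$-center and pay extra proportional to $1/|P_{b(p)}|$ for the cluster-reassignment cost.

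Next I would form $W$ by drawing $N$ i.i.d.\ points from $\Po$ with probability proportional to $\hat{s}(p)$, and assigning each sampled $p$ weight $w(p) = \bigl(\sum_q \hat{s}(q)\bigr)/(N \hat{s}(p))$. For any fixed candidate $k$-subset $S \subseteq \C$, the estimator $\sum_{p \in W} w(p)\,\text{dist}(p,S)$ is an unbiased estimator of $\cost(\Po,S)$, and the per-sample contribution divided by $\cost(\Po,S)$ is bounded by $\sum_q \hat{s}(q)/(N \cdot s(p)/\hat{s}(p) \cdot s(p)) = O(k)/N$ thanks to the sensitivity bound. A standard multiplicative Chernoff (or Bernstein) bound therefore yields $(1\pm\eps)$ concentration with failure probability $\exp(-\Omega(\eps^2 N / k))$.

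To upgrade this pointwise concentration to a uniform statement over \emph{all} $k$-subsets of $\C$, I would use that the number of relevant queries is bounded by $\binom{m}{k} \le m^k$, and apply a union bound. Choosing $N = \Theta(k^2 \log m / \eps^2)$ makes the union-bound failure probability negligible, yielding an $\eps$-coreset of size $(k \eps^{-1} \log m)^{O(1)}$ as required. The construction is clearly polynomial time, since the only non-trivial step is the initial $O(1)$-approximation.

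The main obstacle is the sensitivity analysis in the first paragraph: one needs a clean argument that, for an \emph{arbitrary} $k$-subset $S \subseteq \C$, the ratio $\text{dist}(p,S)/\cost(\Po,S)$ is dominated by a closed-form expression that only depends on the fixed clustering induced by $S_0$, and that the resulting upper bounds sum to $O(k)$ rather than, say, $O(n)$. Once this sensitivity/total-sensitivity bound is in hand, the sampling and union-bound steps are routine and the $\log m$ factor arises exclusively from the union bound over $m^k$ center configurations; the alternative route, via the $O(k \log m)$ pseudo-dimension of the discrete $k$-median range space, gives the same final bound and can substitute for Step 4 if desired.
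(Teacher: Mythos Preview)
The paper does not supply a proof of this lemma; it is stated as a restatement of results from~\cite{feldman2011unified,chen2009coresets} and invoked as a black box in the subsequent corollaries. Your sketch is a faithful outline of the Feldman--Langberg sensitivity-sampling framework and does establish the claimed bound: the upper-sensitivity formula $\hat s(p)$ and the $O(k)$ total-sensitivity estimate are exactly the standard \kmed calculation (so the ``main obstacle'' you flag is already handled in the references you cite), and a Bernstein bound together with a union bound over the at most $m^k$ candidate $k$-subsets of $\C$ gives sample size $N = O(k^2 \eps^{-2} \log m) = (k\eps^{-1}\log m)^{O(1)}$.

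One cosmetic point: the displayed expression in your concentration paragraph is garbled; the intended inequality is simply
\[
\frac{w(p)\,\mathrm{dist}(p,S)}{\cost(\Po,S)} \;\le\; \frac{\sum_q \hat s(q)}{N} \;=\; \frac{O(k)}{N},
\]
which follows directly from $\hat s(p) \ge s(p) \ge \mathrm{dist}(p,S)/\cost(\Po,S)$. Also, the $\exp(-\Omega(\eps^2 N/k))$ rate you quote is the Bernstein rate; a plain Hoeffding/Chernoff bound would give $\exp(-\Omega(\eps^2 N/k^2))$, though either suffices for the $(k\eps^{-1}\log m)^{O(1)}$ conclusion.
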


From there we can deduce the following corollary.
\begin{corollary}
  \label{cor:below2}
  There exists a 2-approximation algorithm for continuous \kmed instances of $n$ points in $\R^d$
  with running time 
  $(kd\eps^{-1} \log n)^{O(k)} + \text{poly}(nd/\eps)$.
\end{corollary}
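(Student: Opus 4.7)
The plan is to combine the coreset compression of Lemma~\ref{lem:coreset} with the ``input points are good candidates'' observation of Fact~\ref{fact:2approx}. The preprocessing phase compresses $\Po$ into a tiny weighted instance in polynomial time, and the main phase brute-force enumerates $k$-subsets of that instance as continuous-center candidates; the two regimes together match the claimed running time $(kd\eps^{-1}\log n)^{O(k)} + \poly(nd/\eps)$.

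Concretely, I first apply Lemma~\ref{lem:coreset} to the discrete \kmed instance $(\Po, \C := \Po)$, for which $m = n$. This produces, in time $\poly(nd/\eps)$, a weighted $\eps$-coreset $W \subseteq \Po$ of size $s = (k\eps^{-1}\log n)^{O(1)}$ with the guarantee
\[
\sum_{p \in \Po}\min_{c \in C}\|p - c\|_\infty \;=\; (1 \pm \eps)\sum_{p \in W} w(p)\min_{c \in C}\|p - c\|_\infty
\]
for every $k$-subset $C \subseteq \Po$. The algorithm then enumerates all $\binom{s}{k} = (k\eps^{-1}\log n)^{O(k)}$ subsets $C$ of $W$, evaluates the weighted coreset cost in $O(ksd)$ time, and returns the minimizer $\hat{S}$ as the continuous centers. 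Incorporating the dimension yields the stated running time.

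For correctness, let $\OPT_c$ denote the continuous optimum on $\Po$. Fact~\ref{fact:2approx} produces a set $S^\star \subseteq \Po$ whose cost on $\Po$ is at most $2\,\OPT_c$, and the coreset identity shows that the weighted-coreset cost of $S^\star$ is at most $(1+\eps)\cdot 2\,\OPT_c$. Since $\hat{S}$ minimizes the weighted-coreset cost over $k$-subsets of $W$, assuming that $S^\star$ can be ``routed'' to a $k$-subset of $W$ without inflating the coreset cost by more than a $(1+O(\eps))$ factor, applying the coreset identity in reverse bounds the cost of $\hat{S}$ on $\Po$ by $\tfrac{1+\eps}{1-\eps}\cdot 2\,\OPT_c$, which becomes the desired $2$-approximation after a suitable rescaling of $\eps$.

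The main obstacle is precisely that routing step: Fact~\ref{fact:2approx} gives a good $k$-subset of $\Po$, but the algorithm only searches $k$-subsets of the (much smaller) coreset support $W$. I would close the gap by exploiting the structure of the standard strong coreset construction of~\cite{feldman2011unified, chen2009coresets}: it is built around a constant-factor approximation, and the weights are assigned so that each $p \in \Po$ has a representative in $W$ at distance $O(\eps)$ times its assignment cost in that approximation. Replacing each center in $S^\star$ by its nearest coreset point then costs only an additional $(1+O(\eps))$ factor, via the same triangle inequality argument used inside Fact~\ref{fact:2approx} itself. This reduces the search to $W$, matches the running time, and yields the $2$-approximation guarantee.
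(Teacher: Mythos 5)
Your overall architecture (compress with a coreset, brute-force over $k$-subsets of its support, argue via the snapping idea of Fact~\ref{fact:2approx}) is the same as the paper's, but the correctness argument has a genuine gap, and it is exactly the step you flag as the ``main obstacle.'' By instantiating Lemma~\ref{lem:coreset} with $\C:=\Po$, your coreset guarantee only covers center sets $S\subseteq\Po$, so you are forced to route through $S^\star$ from Fact~\ref{fact:2approx}, which already costs $2\,\OPT_c$. Any further loss in moving from $S^\star$ to a $k$-subset of $W$ multiplies on top of that factor $2$. Your proposed fix --- replace each center of $S^\star$ by its nearest coreset point ``via the same triangle inequality argument used inside Fact~\ref{fact:2approx}'' --- is internally inconsistent: that argument yields a factor $2$, not $1+O(\eps)$, so the naive route gives a $4$-approximation. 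The structural property you invoke from the Chen/Feldman--Langberg constructions (each input point has a representative in $W$ at distance $O(\eps)$ times its assignment cost in some constant-factor approximation) does not rescue this either: moving a center $s$ to such a representative increases the cost of every point in $s$'s cluster by that displacement, and the resulting additive error $\sum_s |C_s|\cdot O(\eps)\,d(s,A)$ is not bounded by $O(\eps)\cdot\cost(S^\star)$ in general.

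The paper avoids paying the factor $2$ twice by first invoking Lemma~\ref{lem:simple} to build an $\eps$-approximate candidate center set $\C\supseteq\Po$ of size $n(1/\eps)^{O(d)}\log n$, and only then building the coreset $C_0$ with respect to $\C$ (this is why the corollary's running time has $d$ inside the base: $\log m = O(d\log(1/\eps)+\log n)$, whereas your choice $m=n$ would not produce that dependence). Since $\C$ contains a $k$-subset $S^*$ of $\Po$-cost at most $(1+\eps)\OPT_c$, the coreset guarantee applies to $S^*$; snapping each center of $S^*$ to its nearest point of $C_0$ then costs a single factor $2$ on the weighted instance $(C_0,w)$ (here the snapping targets and the clustered points coincide, so $\|s-p_s\|\le\|s-q\|$ for every $q\in C_0$), and the coreset identity in reverse transfers the bound back to $\Po$, giving $(2+O(\eps))\OPT_c$. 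To repair your proof you should adopt this order of quantifiers: get the coreset guarantee to cover a near-\emph{continuous}-optimal solution first, and spend the factor $2$ only once, when snapping into the coreset support.
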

\begin{proof}
  The corollary follow from Lemma~\ref{lem:simple} and Lemma~\ref{lem:coreset}: one can obtain
  an $\eps$-coreset $C_0$ of size $(kd\eps^{-1} \log n)^{O(1)}$ of any \kmed instance consisting
  of $n$ points in $\R^d$. Hence, by Fact~\ref{fact:2approx}, the best \kmed solution
  whose centers are in $C_0$ is a $(2+\eps)$-approximation to the original continuous \kmed instance and
  so, the algorithm that enumerates all $k$-tuples of $C_0$ and outputs the one that has minimum \kmed
  cost for the instance achieves a $(2+\eps)$-approximation in the prescribed time bounds.
\end{proof}

\begin{corollary}
  \label{lem:candidatecenters}
  There exists an algorithm that on any continuous \kmed instance of $n$ points in $\R^d$,
  produces an $\eps$-approximate candidate center set of size $(kd\eps^{-1} \log n)^{O(1)}(1/\eps)^{O(d)}$.
\end{corollary}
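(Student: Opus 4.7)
The plan is to compose Lemma~\ref{lem:simple} with the coreset construction of Lemma~\ref{lem:coreset}, using the coreset to replace the linear dependence on $n$ in Lemma~\ref{lem:simple} by a polylogarithmic one. Concretely, I would proceed in three steps. First, apply Lemma~\ref{lem:simple} directly to $\Po$ to obtain an $\eps$-approximate candidate center set $\C_0$ of size $n(1/\eps)^{O(d)}\log n$. Second, view $(\Po,\C_0)$ as a discrete \kmed instance and invoke Lemma~\ref{lem:coreset} to obtain a weighted subset $W\subseteq\Po$ of size $(k\eps^{-1}\log|\C_0|)^{O(1)}=(kd\eps^{-1}\log n)^{O(1)}$ for which $\cost_\Po(S)=(1\pm\eps)\,\cost_W(S)$ for every $k$-subset $S\subseteq\C_0$. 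Third, rerun the construction of Lemma~\ref{lem:simple} on the point set $W$, but using the \emph{same} cost estimate $\gamma$ and the \emph{same} radius range $[\eps\gamma/n,\,2\gamma]$ (with $n=|\Po|$) as in the first step; since $W\subseteq\Po$, this produces a candidate set $\C_1\subseteq\C_0$ of size at most $|W|(1/\eps)^{O(d)}\log n=(kd\eps^{-1}\log n)^{O(1)}(1/\eps)^{O(d)}$, as desired.

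For correctness, let $\OPT$ denote the continuous optimum on $\Po$ and let $S_0\subseteq\C_0$ satisfy $\cost_\Po(S_0)\le(1+\eps)\OPT$, as guaranteed by the first step. The coreset property then gives $\cost_W(S_0)\le(1+\eps)^2\OPT$, so the continuous \kmed optimum on $W$ is at most $(1+\eps)^2\OPT$. Applying Lemma~\ref{lem:simple} to $W$ therefore yields $S_1\subseteq\C_1$ with $\cost_W(S_1)\le(1+\eps)^3\OPT$. Because $\C_1\subseteq\C_0$, the coreset property can be invoked once more, this time on $S_1$, giving $\cost_\Po(S_1)\le(1+\eps)^4\OPT$; rescaling $\eps$ by a constant factor then finishes the proof.

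The main obstacle is ensuring the containment $\C_1\subseteq\C_0$, which is what makes the final invocation of the coreset property — needed to translate the cost bound on $W$ back to $\Po$ — legal in the first place. The subtlety is that running Lemma~\ref{lem:simple} on $W$ with its ``natural'' parameters $(\gamma_W,|W|)$ would generically produce a candidate set not contained in $\C_0$, since the inner nets would have different cell sizes. My remedy is to reuse the original parameters $(\gamma,n)$ when building nets around each point of $W$, so that every net point added in the third step already appears in $\C_0$. A brief re-examination of Lemma~\ref{lem:simple}'s proof confirms that it still yields a $(1+\eps)$-approximate candidate set on $W$ under this choice, because the lower cutoff $\eps\gamma/n$ continues to bound the negligible contribution of each cluster (the total cost from clusters whose centers lie within $\eps\gamma/n$ of some input point is still at most $n\cdot(\eps\gamma/n)=\eps\gamma=O(\eps\,\OPT)$, regardless of whether the current point set is $\Po$ or $W$).
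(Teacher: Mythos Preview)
Your proposal is correct and follows essentially the same three-step approach as the paper's proof: apply Lemma~\ref{lem:simple} to $\Po$, compute a coreset via Lemma~\ref{lem:coreset} with respect to that candidate set, and then rerun the net construction of Lemma~\ref{lem:simple} on the coreset. In fact you are more explicit than the paper about the key point---namely, arranging $\C_1\subseteq\C_0$ so that the coreset guarantee can be invoked a second time to transfer the cost bound from $W$ back to $\Po$---which the paper's two-sentence proof leaves implicit under ``the proof of Lemma~\ref{lem:simple} also applies to weighted sets of points.''
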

\begin{proof}
  The proof follows from applying Lemma~\ref{lem:coreset} on the input
  points and the $\eps$-approximate candidate center set $\C$ described by
  Lemma~\ref{lem:simple}.
  Then, by observing that the proof of Lemma~\ref{lem:simple} also applies to weighted
  set of points, one can further reduce the number of candidate centers to a set
  $\C'$ of size
  $(kd\eps^{-1} \log n)^{O(1)} (1/\eps)^{O(d)}$.
\end{proof}

\begin{corollary}
  \label{cor:fptk}
  There exists a $(1+\eps)$-approximation algorithm with running time
  \[(kd\eps^{-1} \log n)^{O(k)}(1/\eps)^{O(dk)} + \text{poly}(nd/\eps).\]
\end{corollary}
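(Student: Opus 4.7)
The plan is to combine the $\eps$-approximate candidate center set from Corollary~\ref{lem:candidatecenters} with the coreset construction of Lemma~\ref{lem:coreset}, and then exhaustively enumerate all $k$-tuples of candidate centers. Fix $\eps' := \eps/3$. First I would invoke Corollary~\ref{lem:candidatecenters} on the continuous \kmed instance $\Po\subseteq \R^d$, producing an $\eps'$-approximate candidate center set $\C'\subseteq \R^d$ of size $N = (kd(\eps')^{-1}\log n)^{O(1)}(1/\eps')^{O(d)} = (kd\eps^{-1}\log n)^{O(1)}(1/\eps)^{O(d)}$. By definition, some $k$-subset of $\C'$ realises a \kmed cost at most $(1+\eps')\,\OPT$, where $\OPT$ is the optimum continuous \kmed cost, so the remaining task reduces to a \emph{discrete} \kmed problem on point set $\Po$ with candidate center set $\C'$, at the price of a multiplicative $(1+\eps')$ loss.

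Next I would feed $(\Po,\C')$ into the coreset algorithm of Lemma~\ref{lem:coreset} with parameter $\eps'$. This yields a weighted set $W$ with $|W| = (k(\eps')^{-1}\log N)^{O(1)} = (kd\eps^{-1}\log n)^{O(1)}$ such that for \emph{every} choice of $k$ centers $S\subseteq \C'$, the weighted \kmed cost of $W$ with respect to $S$ equals the true cost of $\Po$ with respect to $S$ up to a $(1\pm \eps')$ factor. Both $\C'$ and $W$ can be constructed in $\text{poly}(nd/\eps)$ time.

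The algorithm then enumerates all $k$-element subsets of $\C'$, evaluates the weighted cost on $W$ in $O(k|W|)$ time per subset, and returns the minimiser. The number of subsets is at most
\[
N^k \;\le\; (kd\eps^{-1}\log n)^{O(k)}(1/\eps)^{O(dk)},
\]
so the enumeration phase runs in $(kd\eps^{-1}\log n)^{O(k)}(1/\eps)^{O(dk)}$ time, which together with the $\text{poly}(nd/\eps)$ preprocessing matches the claimed bound. Composing the two multiplicative $(1+\eps')$ losses yields a $(1+\eps')^2\le 1+\eps$ approximation. The only subtle point, rather than a real obstacle, is that the coreset guarantee must hold \emph{uniformly} over every $k$-subset $S\subseteq \C'$ and not merely the optimum one; this is precisely the universal quantification in the definition of a strong coreset, so the enumeration step is sound. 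Overall the corollary is essentially a black-box assembly of the two preceding lemmas.
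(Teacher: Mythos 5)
Your proposal is correct and follows essentially the same route as the paper: compute the $\eps$-approximate candidate center set of Corollary~\ref{lem:candidatecenters} and exhaustively enumerate all $k$-tuples from it. Your extra step of evaluating each candidate $k$-tuple on a coreset rather than on all of $\Po$ is a slightly more explicit accounting of why the running time splits additively into $(kd\eps^{-1}\log n)^{O(k)}(1/\eps)^{O(dk)} + \text{poly}(nd/\eps)$, but since Corollary~\ref{lem:candidatecenters} already constructs that coreset internally, this is the same argument rather than a different one.
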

\begin{proof}
  The $(1+\eps)$-approximation algorithm follows from computing the set of candidate centers
  $\C'$ prescribed by Corollary~\ref{lem:candidatecenters} and
  enumerating all $k$-tuples of
  $\C'$ and outputting the one which induces the smallest \kmed cost.
\end{proof}

\begin{corollary}
  \label{cor:1+2/e}
  There exists a $(1+2/e+\eps)$-approximation algorithm for continuous \kmed instances of $n$ points in $\R^d$
  with running time
  \[(kd\eps^{-1} \log n)^{O(k)} +(kd\eps^{-1} \log n)^{O(1)}(1/\eps)^{O(d)} + \text{poly}(nd/\eps).\]
\end{corollary}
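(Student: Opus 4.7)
The plan is to reduce the continuous \kmed problem to a small-size discrete \kmed problem and then invoke the FPT $(1+2/e+\eps)$-approximation algorithm for discrete \kmed of Cohen-Addad, Gupta, Kumar, Lee and Li~\cite{Cohen-AddadG0LL19}. The factor $1+2/e$ here is precisely the Guha--Khuller hardness threshold for the discrete case cited in the introduction, which is why this is the natural approximation target for the corollary and why one cannot hope to drop below it via this kind of reduction.

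First, I apply Corollary~\ref{lem:candidatecenters} to the input point set $\Po$ to produce an $\eps$-approximate candidate center set $\C'$ of size $s_{\C'} = (kd\eps^{-1}\log n)^{O(1)}(1/\eps)^{O(d)}$; the construction time is absorbed in the second and third terms of the claimed running time. Second, I apply Lemma~\ref{lem:coreset} to the discrete \kmed instance $(\Po,\C')$ in $\text{poly}(nd/\eps)$ time, obtaining a weighted coreset $C_0$ of size $s_{C_0} = (k\eps^{-1}\log s_{\C'})^{O(1)} = (kd\eps^{-1}\log n)^{O(1)}$ that preserves within a $(1\pm\eps)$ factor the cost of every $k$-subset of $\C'$. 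Third, I run the FPT $(1+2/e+\eps)$-approximation algorithm of~\cite{Cohen-AddadG0LL19} on the weighted discrete instance $(C_0,\C',k)$; its running time is polynomial in the number of candidate centers $s_{\C'}$ and of order $s_{C_0}^{O(k)} = (kd\eps^{-1}\log n)^{O(k)}$ in the number of data points, which yields the first term of the stated bound.

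Composing the three $(1+O(\eps))$-factor losses (from restricting to the candidate set $\C'$, from the coreset $C_0$, and from the FPT algorithm itself) with the intrinsic $1+2/e$ factor produces a $(1+2/e+O(\eps))$-approximation to the original continuous optimum; rescaling $\eps$ by an absolute constant gives the claimed $(1+2/e+\eps)$-approximation. The only non-routine point to verify is that the algorithm of~\cite{Cohen-AddadG0LL19} handles arbitrary weighted data points paired with an arbitrary finite set of candidate centers in a black-box fashion --- this is indeed the generality in which it is stated and proved, so the composition goes through and no new idea is required beyond the sequential application of Corollary~\ref{lem:candidatecenters}, Lemma~\ref{lem:coreset}, and the cited FPT result.
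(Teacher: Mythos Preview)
Your proposal is correct and follows essentially the same approach as the paper: construct an $\eps$-approximate candidate center set via Corollary~\ref{lem:candidatecenters} and then invoke the FPT $(1+2/e+\eps)$-approximation of~\cite{Cohen-AddadG0LL19} on the resulting discrete instance. The only cosmetic difference is that you insert an explicit coreset step (Lemma~\ref{lem:coreset}) before calling the FPT algorithm, whereas the paper applies the FPT algorithm directly to the $n$-point, $m$-center instance and quotes its running time as $(k\eps^{-1}\log m \log n)^{O(k)} + m$, which already subsumes the coreset construction internally; either way the arithmetic lands on the same three-term bound.
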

\begin{proof}
  Applying Corollary~\ref{lem:candidatecenters}, one constructs an instance of the discrete \kmed
  instance with $n = |\Po|$ points and $m = (kd\eps^{-1} \log n)^{O(1)}(1/\eps)^{O(d)}$ candidate centers.
  Then, one can compute a $(1+2/e+\eps)$-approximation to this instance in time
  $(k \eps^{-1}\log m \log n)^{O(k)} + m$ using the FPT algorithm of~\cite{Cohen-AddadG0LL19}.
\end{proof}

\subsection*{Acknowledgements}
We are truly grateful to Pasin Manurangsi for various detailed discussions that inspired many of the results in this paper.

%% Ce projet a b\'en\'efici\'e d'une aide de l'\'Etat g\'er\'ee
%% par l'Agence Nationale de la Recherche au titre du Programme
%% Appel à projets générique JCJC 2018 portant la r\'ef\'erence
%% suivante : ANR-18-CE40-0004-01.
Karthik C.\ S.\ was supported by Irit Dinur's ERC-CoG grant 772839, the  Israel Science Foundation (grant number 552/16), the Len Blavatnik and the Blavatnik Family foundation, and Subhash Khot's Simons Investigator Award. 
Euiwoong Lee was supported in part by the Simons Collaboration on Algorithms and Geometry.

\bibliographystyle{alpha}
\bibliography{references}

\appendix

\section{Inapproximability of Continuous \kmean and \kmed in $\ell_\infty$-metric with Centers from Integral Lattice}\label{sec:lattice}

\begin{theorem}[\kmean with centers from integral lattice in $n^{O(1)}$ dimensional $\ell_\infty$-metric space]\label{thm:kmeanellinflattice} For any constant $\varepsilon>0$,
given a point-set $\Po\subset \R^m$ of size $n$ (and $m=\poly(n)$) and a parameter $k$ as input, it is \NP-hard to distinguish between the following two cases:
\begin{itemize}
\item \textbf{\emph{Completeness}}:  
There exists $\C':=\{c_1,\ldots ,c_k\}\subseteq \mathbb Z^m$ and $\sigma:\Po\to\C'$ such that $$\sum_{a\in\Po}\left(\|a-\sigma(a)\|_\infty\right)^2\le n,$$
\item \textbf{\emph{Soundness}}: For every $\C':=\{c_1,\ldots ,c_k\}\subseteq \mathbb Z^m$ and every $\sigma:\Po\to\C'$ we have: $$\sum_{a\in\Po}\left(\|a-\sigma(a)\|_\infty\right)^2\ge (9-\varepsilon)\cdot n.$$
\end{itemize}
\end{theorem}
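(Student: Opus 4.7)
The plan is to extend the reduction of Theorem~\ref{thm:kmeanellinfhighdim} to the integer-lattice setting, leveraging the additional rigidity of $\Z^m$ to boost the factor $4$ to $9$. As in the proof of Theorem~\ref{thm:kmeanellinfhighdim}, the starting point is the Khot--Saket hardness (Theorem~\ref{thm:KS12}) of partitioning a graph into $k$ independent sets versus having no independent set of size $\eps|V|$; here $k$ is taken to be the input parameter and $t$ is chosen large enough that the soundness side admits no independent set of size $\eps|V|$. The crucial new ingredient is the design of the coordinate encoding.

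Concretely, for each edge $e=(u',v')$ of the Khot--Saket graph we would introduce a coordinate whose values on the two endpoints differ by $6$ (e.g.\ $A(u')_e=3$, $A(v')_e=-3$). The gap $6$ is chosen so that any integer center $c_i(e)$ in a cluster containing both $u'$ and $v'$ necessarily satisfies $|3-c_i(e)|+|-3-c_i(e)|\ge 6$, giving $\|A(u')-c_i\|_\infty+\|A(v')-c_i\|_\infty\ge 6$ and hence squared-$\ell_\infty$ cost at least $18$ on the pair, i.e.\ $9$ per vertex. The ``background'' values $A(w)_e$ for $w\notin\{u',v'\}$ must then be chosen so that in completeness each independent-set cluster $V_i$ still admits an integer center $c_i$ with $\|A(v)-c_i\|_\infty\le 1$ for every $v\in V_i$ --- i.e., the coordinate-wise range of each $V_i$ is at most $2$ and the resulting interval contains an integer.

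Given such an encoding, the completeness analysis would mirror Theorem~\ref{thm:kmeanellinfhighdim}: we construct $c_i\in\Z^m$ explicitly from the independent-set partition and verify the $\ell_\infty$-bound coordinate by coordinate, yielding total cost at most $n$. The soundness analysis would use the matching argument of Claim~\ref{cl:so}: the Khot--Saket soundness guarantees that each cluster $V_i$ (of non-negligible size) contains a matching of $\Omega(|V_i|)$ vertex-disjoint internal edges, and each internal edge forces squared $\ell_\infty$-cost at least $9$ per endpoint via the coordinate-$e$ constraint above. Summing, one obtains the lower bound $(9-\eps)n$ in soundness, and the \kmed variant Theorem~\ref{thm:kmedellinflattice} would follow from the same encoding using the linear estimate $|3-c|+|-3-c|\ge 6$ to get per-vertex $\ell_\infty$-cost at least $3$.

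The main obstacle is the design of the background encoding to simultaneously satisfy (a) integer $\ell_\infty$-distance $\le 1$ in completeness and (b) integer $\ell_\infty$-distance $\ge 3$ in soundness. A naive choice $A(w)_e=0$ only yields a $9/4$ gap, since completeness clusters containing an endpoint of $e$ already have coordinate range $3$. Overcoming this requires a more elaborate encoding in which the background values depend on some combinatorial structure (e.g., orientation labels, or multiple copies of each edge coordinate indexed by a ``cluster guess'' $l\in[k]$, in the spirit of the cloud lifting of Section~\ref{subsec:mkc}) so that for every independent-set $V_i$ there exists an integer center $c_i$ that is compatible with the edge-endpoint values of $V_i$'s members and with the backgrounds used by the non-endpoint members, while no integer center is compatible with both endpoints of any internal edge.
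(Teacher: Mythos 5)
You have correctly identified that the paper proves this theorem by rerunning the reduction of Theorem~\ref{thm:kmeanellinfhighdim} with a modified coordinate encoding, and your diagnosis of where the difficulty sits (choosing background values so that each independent set still admits an \emph{integral} center at small $\ell_\infty$-distance) is exactly right. However, your proposal stops precisely at that crux: you never exhibit the encoding, and you explicitly defer the ``main obstacle'' to an unspecified ``more elaborate'' construction. This is a genuine gap, not a routine detail. Worse, the specific target you set --- endpoints of an edge separated by $6$ (values $3$ and $-3$) so that every integer $c$ pays $|3-c|+|-3-c|\ge 6$, combined with completeness radius $1$ --- is unattainable with any vertex-independent background value $\beta$: completeness requires an integer $c$ with $|A(u')_e-c|\le 1$ and $|\beta-c|\le 1$, and an integer $c'$ with $|A(v')_e-c'|\le 1$ and $|\beta-c'|\le 1$; then $|A(v')_e-c|\le |A(v')_e-c'|+|c'-\beta|+|\beta-c|\le 3$, so at the center $c$ the pair pays at most $1+3=4<6$ in the linear accounting and at most $1^2+3^2=10<18$ in the squared accounting. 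So the ``more elaborate encoding'' is mandatory for your route, and you give no evidence that one exists.

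The paper's own (three-line) proof goes in the opposite direction from yours: it keeps the endpoint separation at $2$ but places every coordinate at a half-integer, setting $A(v)_{(u',v')}=1.5$, $-0.5$, or $0.5$ according to whether $v=u'$, $v=v'$, or neither. The intended gain comes from \emph{shrinking the completeness cost} --- every coordinate of every point is at distance exactly $\nicefrac{1}{2}$ from the integer lattice, and each independent set admits an integral center (with entries in $\{0,1\}$) at $\ell_\infty$-distance exactly $\nicefrac{1}{2}$ from all its members --- rather than from inflating the soundness cost as you attempt. You should be aware, though, that even this encoding does not transparently deliver the stated constants when pushed through the matching argument of Claim~\ref{cl:so}: per internal edge the soundness bound is $\min_{c\in\Z}\{(1.5-c)^2+(0.5+c)^2\}=2.5$, against a completeness contribution of $2\cdot(1/2)^2=1/2$ for the same two points, which yields a ratio of $5$ for \kmean (and $2$ for \kmed), not $9$ (resp.\ $3$). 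In short: your high-level plan matches the paper's, but the construction you would need is missing, the particular numbers you propose provably cannot work, and the encoding the paper actually uses operates on the completeness side rather than the soundness side.
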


\begin{theorem}[\kmed with centers from integral lattice in $n^{O(1)}$ dimensional $\ell_\infty$-metric space]\label{thm:kmedellinflattice} For any constant $\varepsilon>0$,
given a point-set $\Po\subset \R^m$ of size $n$ (and $m=\poly(n)$) and a parameter $k$ as input, it is \NP-hard to distinguish between the following two cases:
\begin{itemize}
\item \textbf{\emph{Completeness}}:  
There exists $\C':=\{c_1,\ldots ,c_k\}\subseteq \mathbb Z^m$ and $\sigma:\Po\to\C'$ such that $$\sum_{a\in\Po}\|a-\sigma(a)\|_\infty\le n,$$
\item \textbf{\emph{Soundness}}: For every $\C':=\{c_1,\ldots ,c_k\}\subseteq \mathbb Z^m$ and every $\sigma:\Po\to\C'$ we have: $$\sum_{a\in\Po}\|a-\sigma(a)\|_\infty\ge (3-\varepsilon)\cdot n.$$
\end{itemize}
\end{theorem}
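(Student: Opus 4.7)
The plan is to adapt the reduction used to prove Theorem~\ref{thm:kmedellinfhighdim}, starting from the Khot--Saket $k$-coloring hardness (Theorem~\ref{thm:KS12}), but with a modified coordinate encoding of the vertices of the hard graph $G$ so that the integrality constraint $\C' \subseteq \mathbb{Z}^m$ strictly strengthens the soundness bound. Concretely, for each edge $e = (u', v') \in E(G)$, I would introduce one coordinate and set $A(u')_e = \alpha$, $A(v')_e = -\alpha$, and $A(w)_e = 0$ for all other vertices $w$, for a carefully chosen scalar $\alpha$ (possibly augmented by a multi-coordinate ``gadget'' per edge).

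The completeness direction would follow the same script as Theorem~\ref{thm:kmedellinfhighdim}: the independent sets $V_1, \ldots, V_k$ guaranteed by Theorem~\ref{thm:KS12} give rise to integer centers $c_i \in \mathbb{Z}^m$ defined by $c_i(e) = +1$ if $u' \in V_i$, $-1$ if $v' \in V_i$, and $0$ otherwise. Since each such $c_i$ is lattice-feasible, a coordinate-wise check gives $\|A(v) - c_i\|_\infty \le 1$ for every $v \in V_i$, so the total cost is at most $n$. For soundness, I would invoke Claim~\ref{cl:so} to extract a vertex-disjoint matching $E_i$ of size $\Omega(|V'|)$ inside every non-negligible cluster $V'$ of the optimal clustering, and apply the reverse triangle inequality on each matched edge: $\|A(u') - c\|_\infty + \|A(v') - c\|_\infty \ge 2\alpha$. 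Tuning $\alpha$ so that the integrality of $c(e) \in \mathbb{Z}$ boosts this sum to at least $6$ while the per-point completeness distance remains at $1$ is the technical crux; once in place, summing over matched edges gives total soundness $\ge (3 - \eps)n$ for $k$-median, and squaring the per-edge inequality yields $\ge (9 - \eps)n$ for $k$-means.

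The main obstacle is precisely this balancing on $\alpha$. Naively scaling to $\alpha = 3$ lifts the per-edge soundness sum to $6$, but the optimal integer center for a completeness cluster then incurs per-coordinate distance $\ge 2$ (since the best integer rounding of the cluster's ``half-value'' $3/2$ lies halfway between two integers), inflating completeness cost to about $2n$. Keeping $\alpha = 2$ preserves completeness $\le n$ but caps the per-edge soundness sum at $4$, recovering only the continuous factor of $2$. Breaking this trade-off, I expect, will require either (a) a multi-coordinate encoding per edge that distributes the soundness cost across several coordinates so that the completeness center can round each coordinate ``for free'', or (b) a denser subgraph analysis on the soundness side, e.g., extracting short cycles or small cliques in each $G_{V'}$, to yield a per-vertex lower bound of $3$ that goes strictly beyond what pairwise triangle inequalities deliver. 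Pinning down the final constant of $3$ for $k$-median (and $9$ for $k$-means, by squaring the per-edge inequality) is where the bulk of the $\ell_\infty$ geometric work lies.
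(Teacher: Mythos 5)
There is a genuine gap, and you have in fact located it yourself without closing it. As you correctly observe, any soundness argument that runs through Claim~\ref{cl:so} and the reverse triangle inequality on matched pairs is blind to the lattice constraint: the bound $\|A(u')-c\|_\infty+\|A(v')-c\|_\infty\ge\|A(u')-A(v')\|_\infty$ holds for arbitrary $c\in\R^m$, so this route caps out at the continuous factor of $2$ no matter how you tune $\alpha$ in a symmetric $\{\alpha,-\alpha,0\}$ encoding. Your proposed escapes (multi-coordinate gadgets, or extracting triangles/short cycles) do not obviously help either --- e.g.\ a triangle only yields $\alpha$ per vertex from pairwise inequalities, which is again ratio $2$. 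The gain from integrality has to be extracted \emph{per point and per coordinate}, not per edge, and that requires an asymmetric, half-integer-offset encoding that your proposal never reaches.

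The construction the paper uses is: for each oriented edge $e=(u',v')$, set $A(u')_e=3/2$, $A(v')_e=-1/2$, and $A(w)_e=1/2$ for every $w\notin e$, so that \emph{every} coordinate of \emph{every} point is a half-integer. Completeness: the integral center $c_i(e)=1$ if $u'\in V_i$ and $c_i(e)=0$ otherwise is well defined because $V_i$ is independent, and a coordinate check shows $\|A(v)-c_i\|_\infty=1/2$ for all $v\in V_i$. Soundness: the distance from a half-integer to any integer is either exactly $1/2$ or at least $3/2$; moreover, if $e=(v,w)$ is an edge with both endpoints assigned to a cluster with integral center $c$, then $|3/2-c(e)|<3/2$ forces $c(e)\in\{1,2\}$ while $|-1/2-c(e)|<3/2$ forces $c(e)\in\{-1,0\}$. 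Hence for each cluster the set $\{v : \|A(v)-c\|_\infty<3/2\}$ is an independent set of $G$, so by the soundness of Theorem~\ref{thm:KS12} all but an $O(\eps)$ fraction of the points pay at least $3/2$ each (and $9/4$ after squaring), with no matching argument needed. This yields the gap $(3/2)/(1/2)=3$ for \kmed and $(3/2)^2/(1/2)^2=9$ for \kmean; note the instance should then be read with completeness cost $n/2$ and soundness cost $(3/2-\eps)n$, i.e.\ the content of the theorem is the ratio, which your proposal does not establish.
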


\begin{proof}[Proof of Theorems~\ref{thm:kmeanellinflattice} and \ref{thm:kmedellinflattice}]
The proof follows as with the proof of Theorems~\ref{thm:kmeanellinfhighdim} and \ref{thm:kmedellinfhighdim}, but we have the following construction of $A:V\to\mathbb{R}^{m}$.
For every $v\in V$ and every $(u',v')\in E$, we define the $(u',v')^{\text{th}}$ coordinate of $A(v)$ as follows
\begin{align*}
A(v)_{(u',v')}:=\begin{cases}
1.5\text{\ \ \ \ \ \  if }v=u'\\
-0.5\text{\ \ \ if }v=v'\\
0.5\text{\ \ \ \ \ \  otherwise}
\end{cases}.
\end{align*}
\end{proof}

\end{document}